\crefname{equation}{condition}{conditions}
\lstdefinelanguage{prog}
{
	morekeywords={if, then, else, fi, while, do, od, true, false, and, or, skip, sample, observe,return,score,normal,uniform, prob},
	sensitive = false
}
\newif\ifdraft
\newcommand{\inlsec}[1]{\smallskip\noindent\textbf{#1.}}
\newcommand{\inlsecit}[1]{\smallskip\noindent\textit{#1.}}
\newcommand{\probm}{\mathbb{P}}
\newcommand{\Rset}{\mathbb{R}}
\newcommand{\Nset}{\mathbb{N}}
\newcommand{\expv}{\mathbb{E}}
\newcommand{\pv}{\mathbf{v}}
\newcommand{\rv}{\mathbf{r}}
\newcommand{\lin}{\loc_\mathrm{init}}
\newcommand{\lout}{\loc_\mathrm{out}}
\newcommand{\loc}{\ell}
\newcommand{\valin}{\pv_{\mathrm{init}}}
\newcommand\calV{\mathcal{V}}
\newcommand{\tposterior}{\mathsf{TPD}}
\begin{document}

\pagestyle{plain}

\title{Quantitative Verification of Omega-regular Properties in Probabilistic Programming}

\titlerunning{Quantitative Verification of Omega-regular Properties}

\author{Peixin Wang\inst{1} \and
Jianhao Bai\inst{1} \and
Min Zhang\inst{1} \and
C.-H. Luke Ong\inst{2}}

\authorrunning{Wang et al.}

\institute{East China Normal University, China \and
Nanyang Technological University, Singapore}

\maketitle            

\makeatletter
\renewcommand{\@author}{}
\makeatother


 \begin{abstract}

Probabilistic programming provides a high-level framework for specifying statistical models as executable programs with built-in randomness and conditioning. Existing inference techniques, however, typically compute posterior distributions over program states at fixed time points—most often at termination—thereby failing to capture the temporal evolution of probabilistic behaviors. We introduce \emph{temporal posterior inference} (TPI), a new framework that unifies probabilistic programming with temporal logic by computing posterior distributions over execution traces that satisfy $\omega$-regular specifications, conditioned on (possibly temporal) observations. To obtain rigorous quantitative guarantees, we develop a new method for computing upper and lower bounds on the satisfaction probabilities of $\omega$-regular properties. Our approach decomposes Rabin acceptance conditions into persistence and recurrence components and constructs stochastic barrier certificates that bound each component soundly. We implement our approach in a prototype tool, \textsc{TPInfer}, and evaluate it on a suite of benchmarks, demonstrating effective and efficient inference over rich temporal properties in probabilistic models.

\end{abstract}
 
 \section{Introduction}\label{sec:intro}
 
Probabilistic programming (PP) offers a high-level approach to describing stochastic models by embedding random variables and probabilistic choices directly into program code \cite{gordon2014probabilistic,goodman2012church}. Given observations, PP systems automatically apply Bayes' rule to compute posterior distributions \cite{jaynes2003probability,bishop2006pattern}, providing a principled mechanism for reasoning about latent uncertainty. This makes PP particularly well suited for modern AI systems, which must operate under noisy data, stochastic dynamics, and incomplete information \cite{ghahramani2015probabilistic}.

However, standard posterior inference (SPI) in PP typically concentrates on distributions over program states at fixed time points---most notably at program termination~\cite{van2018introduction}. This implicitly assumes almost-sure termination (AST)~\cite{chakarov2013probabilistic} of probabilistic programs that may not hold~\cite{feng2023lower} and fails to capture the temporal evolution of probabilistic behaviors that unfolds along program execution. To illustrate this limitation, we present a simple example below.

\begin{example}\label{ex:running-example}
Consider a simple probabilistic program in~\cref{fig:running-example}, where $x$ stochastically toggles its value between $0,1$ and $y$ counts the steps in which $x=1$. The posterior probability $\Pr(x=1 \mid y>0)$ denotes the probability that $x$ equals~$1$ at termination, conditioned on observing $y>0$.
\end{example}

\begin{figure}[htbp]
    \begin{minipage}{0.45\textwidth}
    \vspace{1cm}
    \subfloat[A simple probabilistic program]{
        \hspace{1cm}
        \begin{minipage}{0.9\textwidth}
            \raggedright
            \texttt{$x \sim \mathrm{Bernoulli}(0.3)$; $y := 0$; \\
            $\bf{while}$ $\mathrm{Bernoulli}(0.5)$ $\bf{do}$ \\
            \qquad $\bf{if}$ $\mathbf{prob}(0.6)$ $\bf{then}$ \\
            \qquad \qquad $x := 1 - x$ \\
            \qquad $\bf{if}$ $x = 1$ $\bf{then}$ \\
            \qquad \qquad $y := y + 1$ \\
            $\bf{done}$}
        \end{minipage}
    }
\end{minipage}
    \hspace{6\textwidth}

    \begin{minipage}{1.5\textwidth}
    \vspace{-5.5cm}
        \centering
        \subfloat[F$(x=1)\ \& \ $F$(y>0)$]{
            \hspace{-1cm}
            \scalebox{0.5}{
                \hspace{-1cm}
                \begin{tikzpicture}[minimum size=10mm, node distance=30mm, every node/.style={font=\small}]
                    \node[draw, circle, fill=none] (3) {\textbf{0}};
                    \node[draw, circle, fill=none, right of=3] (1) {\textbf{1}};
                    \node[draw, circle, fill=none, below of=1, node distance=20mm] (2) {\textbf{2}};
                    \node[draw, circle, fill=none, right of=1] (0) {\textcolor{red}{\textbf{3}}};
                    \node[fill=none, left of=3] (empty) {};
                    
                    \draw[->, shorten <= 10mm] (empty) -- (3);
                    \draw[->] (3) edge[loop above] node[above] {$x\neq 1 \  \& \ y\leq0$} (3)
                              edge[->] node[left] {$x= 1 \  \& \ y\leq0$} (2)
                              edge[->] node[above] {$x\neq 1 \  \& \ y>0$} (1)
                              edge[->, bend left=50] node[above] {$x=1 \  \& \ y>0$} (0);
                    
                    \draw[->] (0) edge[loop above] node[above] {$1$} (0);
                    
                    \draw[->] (1) edge[loop below] node[right] {$x\neq1$} (1)
                              edge[->] node[above] {$x=1$} (0);
                    
                    \draw[->] (2) edge[loop below] node[right] {$y\leq0$} (2)
                              edge[->] node[right] {$y > 0$} (0);
                \end{tikzpicture}
            }
        }

        \vspace{0cm} 

        \subfloat[G$(x=1)\ \& \ $F$(y>0)$]{
            \hspace{-0.3cm}
            \scalebox{0.7}{
                \begin{tikzpicture}[minimum size=7mm, node distance=10mm, every node/.style={font=\small}]
                    \node[draw, circle, fill=none] (1) {\textcolor{blue}{\textbf{0}}}; 
                    \node[draw, circle, fill=none, right of=0,xshift=-20mm] (2) {\textcolor{blue}{\textbf{2}}};
                    
                    \node[fill=none, right of=1, node distance=25mm] (empty) {};
                    \node[draw, circle, fill=none, above of=empty] (0) {\textcolor{red}{\textbf{1}}}; 
                    \node[fill=none, left of=1] (empty) {};

                    \draw[->] (empty) -- (1);
                    
                    \draw[->] (0) edge[loop above] node[above] {$x=1$} (0)
                              edge[->] node[above] {$x\neq 1$} (2);

                    \draw[->] (1) edge[loop below] node[below] {$x=0\ \& \ y\leq0$} (1)
                               edge[->] node[below] {$x\neq 1$} (2)
                               edge[->] node[left] {$x=1\ \& \ y>0$} (0);

                    \draw[->] (2) edge[loop above] node[above] {$1$} (2);

                \end{tikzpicture}
            }
        }
    \end{minipage}
    \caption{Running-example}
    \label{fig:running-example}
\end{figure}
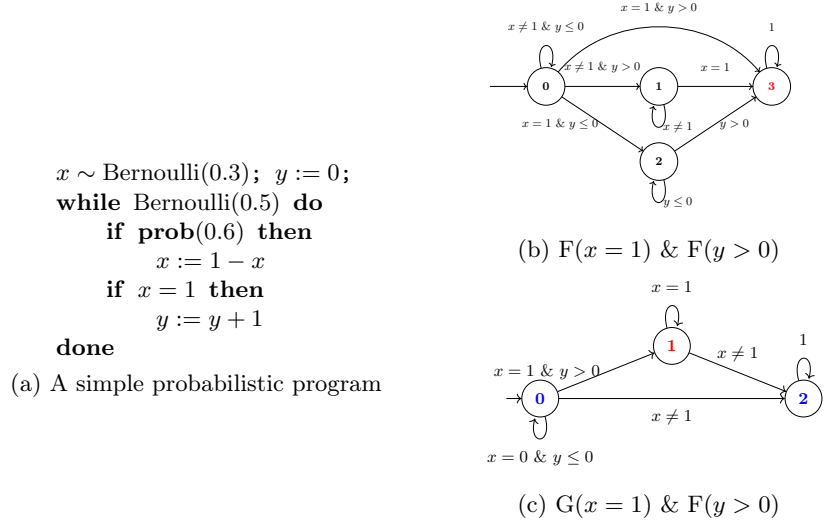

In \cref{ex:running-example}, this limitation becomes apparent: we are not merely interested in the final value of $x$, but in the probability that $x$ remains equal to $1$ throughout the entire execution, conditioned on finally observing $y>0$. This temporal requirement is captured in Linear Temporal Logic (LTL)~\cite{pnueli1977temporal} by the ``always'' operator $\mathbf{G}$ and ``eventually'' operator $\mathbf{F}$, yielding the posterior probability $\Pr(\mathbf{G}\, x = 1 \mid \mathbf{F} y>0)$.\footnote{Here we assume a self-loop at program termination so that we have infinite traces that can be specified by LTL.} 
In fact, the posterior probability $\Pr(x=1 \mid y>0)$ is equivalent to $\Pr(\mathbf{F}\, x = 1 \mid \mathbf{F} y>0)$ under LTL, and thus produces different automata.
Furthermore, many important properties in probabilistic systems, such as safety (``something bad never happens'') and liveness (``something good eventually happens''), depend on how program states evolve over time rather than on a single terminating state, making temporal reasoning indispensable~\cite{baier2008principles}. 

\inlsec{Motivation}
This limitation reveals a gap between SPI and temporal reasoning. SPI is variable-centric and focused on terminal states, whereas temporal reasoning is trace-centric and concerns whole executions. Neither alone can express conditional reasoning about temporal behaviors or evidence that unfolds over time. To bridge this gap, we introduce \emph{temporal posterior inference} (TPI), which computes posterior probabilities over execution traces satisfying $\omega$-regular specifications~\cite{thomas1990automata}, conditioned on (possibly temporal) observations. This unifies probabilistic inference with temporal reasoning and enables likelihood-based analysis of temporal properties.
\begin{framed}
\noindent \textbf{TPI}. Given $\omega$-regular properties $\varphi$ and $\psi$, \textbf{the objective is to compute $\Pr(\varphi \mid \psi)$}, the probability that traces satisfy $\varphi$ conditioned on satisfying $\psi$. 	
\end{framed}

\inlsec{Challenges}
TPI presents two main challenges.
(i) It requires reasoning over entire probabilistic execution traces rather than isolated states; such traces may be unbounded, branching, and exhibit long-range temporal dependencies, making their distributions difficult to analyze. Previous work mainly focuses on termination~\cite{chakarov2013probabilistic,chatterjee2016algorithmic,DBLP:conf/vmcai/FuC19},  or non-temporal properties like sensitivity~\cite{DBLP:journals/pacmpl/BartheEGHS18,DBLP:journals/pacmpl/WangFCDX20}, expectation~\cite{DBLP:conf/pldi/NgoC018,DBLP:conf/pldi/Wang0GCQS19,DBLP:conf/tacas/BatzCJKKM23}, etc.
(ii) For stochastic models, computing quantitative bounds on the satisfaction of $\omega$-regular properties is inherently complex: these properties encode infinite-horizon persistence and recurrence conditions, and conditioning on temporal observations introduces global constraints that drastically complicate the structure of admissible traces. \cite{chakarov2016deductive} cares about qualitative persistence and recurrence. While \cite{abate2025quantitative,henzinger2025supermartingale} work on quantitative temporal verification, they do not consider posterior inference for temporal reasoning. A recent work~\cite{watanabe2025unifying} proposes a coalgebraic framework for quantitative temporal inference but it does not support automated calculation. 

\inlsec{Our approaches}
To address these challenges, we adopt an automata-theoretic approach that enables us to track the potentially infinite temporal evolution of probabilistic programs. In particular, we translate temporal specifications into deterministic Rabin automata (DRAs)~\cite{thomas1990automata}, which provide a structured way to reason about infinite execution traces and to identify the temporal patterns relevant for verification. Building on this foundation, we aim to compute quantitative bounds on the satisfaction probabilities of $\omega$-regular properties under observational conditioning. To make this tractable, we decompose the Rabin acceptance condition into two fundamental components---\emph{persistence} and \emph{recurrence}---corresponding to events that must eventually stabilize or occur infinitely often. For each component, we develop new stochastic barrier certificates~\cite{prajna2004safety} that yield sound upper and lower bounds on the probabilities of satisfying these temporal patterns. By combining the bounds for persistence and recurrence, we obtain quantitative guarantees for the full $\omega$-regular specification, thereby enabling rigorous probabilistic reasoning about temporal behavior in systems with stochastic dynamics.

\inlsec{Contributions}
This paper makes the following contributions:
\begin{enumerate}
	\item We introduce a new framework,  \emph{temporal posterior inference}, which unifies probabilistic inference with temporal-logical reasoning by computing posterior distributions over execution traces that satisfy $\omega$-regular properties.
	
	\item We build an automata-theoretic foundation for TPI by decomposing the Rabin acceptance condition into persistence and recurrence components, enabling systematic tracking of the potentially infinite temporal evolution of probabilistic programs.
	
	\item We provide quantitative analysis methods for the persistence and recurrence components, and devise corresponding new stochastic barrier certificates that yield sound upper and lower bounds on their satisfaction probabilities.
	
	\item We implement our approach into a prototype tool  \textsc{TPInfer} and evaluate it on a suite of benchmarks, demonstrating its practicality and effectiveness for quantitative reasoning about temporal properties in stochastic systems.
\end{enumerate}

All omitted proofs are put in the Appendix.

 \section{Preliminaries and Probabilistic Programming}\label{sec:pre}
\subsection{LTL and Omega-regular Properties}\label{sec:omega-prelim}
Linear Temporal Logic (LTL)~\cite{pnueli1977temporal} specifies how system behaviors evolve along linear-time executions using temporal operators.

\inlsecit{Syntax of LTL}
Let $\mathit{AP}$ be a finite set of atomic propositions.  
The formulas of LTL over infinite traces are defined by the following grammar:
\[
\varphi ::= \top \mid  a \mid \neg\varphi \mid \varphi_1 \wedge \varphi_2 
           \mid \mathbf{X}\varphi \mid \varphi_1\ \mathbf{U}\ \varphi_2
\]
where $a \in \mathit{AP}$ is an atomic proposition, $\mathbf{X}$ is the ``next'' operator, and $\mathbf{U}$ is the ``until'' operator.  
Other standard temporal operators are defined as syntactic sugar, e.g., 
$\mathbf{F}\varphi \equiv \top\ \mathbf{U}\ \varphi$ (eventually), 
$\mathbf{G}\varphi \equiv \neg\mathbf{F}\neg\varphi$ (always).

\inlsecit{Semantics of LTL}
Let $Words(\varphi)=\{\sigma=(A_0,A_1,\dots)\in (2^{\mathit{AP}})^\omega \mid \sigma\models \varphi \}$ be the LTL property induced by $\varphi$. Denote by $\sigma^{i+}=(A_i,A_{i+1},\dots)$ a word starting from index $i\ge 0$.  Then the satisfaction relation $\sigma\models \varphi$ is defined inductively:

\[
\begin{array}{lcr}
	\sigma \models \top 
	\qquad\qquad\qquad\quad
	\sigma \models a \text{ iff } a \in A_0 
	\qquad\qquad\qquad
	\sigma \models \neg\varphi \text{ iff } \sigma \not\models \varphi
	\\
	
	\sigma \models \varphi_1 \wedge \varphi_2 
	 \text{ iff } 
	\sigma \models \varphi_1 \ \text{and}\ \sigma \models \varphi_2 
	\qquad\qquad\qquad
	\sigma \models \mathbf{X}\varphi \text{ iff } \sigma^{1+} \models \varphi
	\\
	
	\sigma \models \varphi_1\,\mathbf{U}\,\varphi_2 
	\text{ iff }
	\exists j \ge 0.\ \sigma^{j+} \models \varphi_2 
	\ \text{and}\ 
	\forall\,0 \le i < j.\ \sigma^{i+} \models \varphi_1
	\\
	
	\sigma \models \mathbf{F}\varphi \text{ iff } \exists j \ge 0.\ \sigma^{j+} \models \varphi
	\qquad\qquad\qquad
	\sigma \models \mathbf{G}\varphi \text{ iff } \forall j \ge 0.\ \sigma^{j+} \models \varphi
\end{array}
\]

$\omega$-regular properties~\cite{thomas1990automata} form a rich class that subsumes LTL. 
They are recognized by $\omega$-automata, which accept infinite words using specialized acceptance conditions. Below we introduce the $\omega$-automata used in our work.

\inlsec{DRAs}
A \emph{deterministic Rabin Automaton} (DRA) is a tuple $\mathcal{A}=(Q,q_0,\Sigma,\delta,\mathit{Acc})$ where $Q$ is a finite set of states, $q_0\in Q$ is the initial state, $\Sigma$ is a finite alphabet, $\delta:Q\times \Sigma\rightarrow Q$ is the transition relation, and $\mathit{Acc}=\{(E_i,F_i)\mid E_i,F_i\subseteq Q, E_i \cap F_i = \emptyset, 1\le i\le M\}$ is the acceptance condition.

\inlsecit{Semantics of DRAs}
A  \emph{word} of $\mathcal{A}$, $\sigma=(\sigma_0,\sigma_1,\dots)\in \Sigma^\omega$, is an infinite sequence of letters. 
A \emph{run} on a word $\sigma$, $\rho=(q_0,q_1,\dots)\in Q^\omega$, is an infinite sequence of states such that $q_{i+1}=\delta(q_i,\sigma_i)$ for all $i\in \Nset_0$. Denote by $\Xi$ the set of all runs by $\mathcal{A}$.
Let $\mathrm{Inf}(\rho)$ be the set of states in $\rho$ that are visited infinitely often. 
A run $\rho$ on a word $\sigma$ is \emph{accepting} iff for some $i\in [1,M]$, $\mathrm{Inf}(\rho)\cap E_i=\emptyset$ and $\mathrm{Inf}(\rho)\cap F_i\neq\emptyset$, then the word  $\sigma$ is  \emph{accepted} by $\mathcal{A}$.
The \emph{language} of $\mathcal{A}$,  denoted by $\mathcal{L}(\mathcal{A})$, is the set of all words accepted by $\mathcal{A}$. 
\subsection{Probabilistic Programming}\label{sec:PPL}
We assume that the readers are familiar with notions from probability theory such as probability measure, random variable and expected value. 

\inlsecit{Syntax of PPLs}
We consider standard imperative probabilistic programming language (PPL) that includes the usual loop, conditional and sequential structures. The syntax is given as follows:
\begin{align*}
	S &::= \textbf{skip} \mid x:=E\mid x\sim D   \mid \mbox{\textbf{while}}\, B \, \text{\textbf{do}} \  S \, \text{\textbf{done}}\\
	&  \mid \mbox{\textbf{if}} \, \mbox{$B$}\,\mbox{\textbf{then}} \,  S_1 \, \mbox{\textbf{else}} \,S_2 \,\mbox{\textbf{fi}}\mid \mbox{\textbf{if}} \, \mbox{\textbf{prob}($p$)}\,\mbox{\textbf{then}} \,  S_1 \, \mbox{\textbf{else}} \,S_2 \,\mbox{\textbf{fi}}  \mid S_1;S_2 \\
	B&::=\textbf{true} \mid\neg B\mid B_1\, \textbf{and} \, B_2 \mid   E_1\le E_2 \\
	E&::= x\mid c\mid E_1+E_2\mid E_1 \times E_2  \ \ \ 
	D ::=   \textbf{uniform}(a,b)\mid \cdots 
\end{align*}
where $a,b \in\Rset$ are constants, $p\in [0,1]$,  and the metavariables $S$, $B$, $E$ and $D$ stand for statements, conditions, arithmetic expressions, and distributions (that are discrete or continuous), respectively.  The statement ``$x\sim D$'' is a sampling assignment that draws from the distribution $D$ to obtain a sample value $r$ and then assigns it to $x$. The statement ``$\mbox{\textbf{if}} \, \mbox{\textbf{prob}($p$)}\,\mbox{\textbf{then}} \,  S_1 \, \mbox{\textbf{else}} \,S_2 \,\mbox{\textbf{fi}}$ '' is a probabilistic-branching statement that executes $S_1$ with probability $p$, or $S_2$ with probability $1-p$.

Let $V=\{x_1,\dots,x_n\}$ be a set of real-valued program variables and $R=\{r_1,\dots,r_m\}$ be a set of real-valued sample variables.

\inlsecit{Semantics of PPLs} 
Given a probabilistic program $P$, it can be modeled as a \emph{probabilistic transition system} (PTS) $\Pi=(L,V,R,\mathcal{D}_R,\mathcal{T},\lin,\lout,\mathcal{D}_0)$ where: 
\begin{itemize}
	\item $L$ is a finite set of program locations, $\lin\in L$ is the initial location and $\lout\in L$ is the terminal location;
	\item $V,R$ are finite sets of program and sample variables;
\item $\mathcal{D}_R$ is a function that assign each sample variable $r\in R$ a probability distribution $\mathcal{D}(r)$, and $\mathcal{D}_R$ can be treated as the joint distribution of all sample variables in $R$;
	\item $\mathcal{T}$ is a finite set of transitions. Each transition $\tau\in\mathcal{T}$ is a tuple $(\loc,\phi,f_1,\dots,f_u)$ such that (i) $\loc$ is the source location, (ii) $\phi$ is the guard predicate over $V$, (iii) each fork is of the form $f_j:=(p_j,\mathit{Up}_j,\loc_j)$ where (a) $p_j\in (0,1]$ is the fork probability, (b) $\mathit{Up}_j:\Rset^{n}\times \Rset^{m}\to \Rset^{n}$ is the update function of program valuations, and (c) $\loc_j$ is the destination location;
	\item $\mathcal{D}_0$ is the initial probability distribution over program valuations.
\end{itemize}

We assume that all PTSs satisfy the no demonic restriction. That is, for each location $\loc$, (1) the guards of any two different transitions starting from $\loc$ are \emph{mutually exclusive}: $\phi_i\wedge \phi_j\equiv \text{false}$ for $i\neq j$; 
(2) the $\phi_1,\dots,\phi_k$ guards of all transitions are \emph{mutually exhaustive}: $\bigvee_{i=1}^k \phi_i\equiv \text{true}$.

A \emph{state} of the PTS is a tuple $s=(\loc,\pv)$ where $\loc\in L$ is the location and $\pv\in\Rset^n$ is the program valuation. We denote the state space by $S$. A transition $\tau=(\loc,\phi,f_1,\dots,f_l)$ is \emph{enabled} at state $s=(\loc,\pv)$ if $\pv\models \phi$. A \emph{trace} of the PTS is an infinite seqence  of states, i.e, $\pi=(s_0,s_1,\dots)$. 
The semantics of the PTS is then given as follows: (1) it starts from the initial state $s_0=(\lin,\pv_0)$ with $\pv_0\sim \mathcal{D}_0$; (2) at the $n$-th step with the state $s_n=(\loc_n,\pv_n)$, if $\loc_n=\lout$, then $s_{n+1}=(\lout,\pv_n)$; otherwise, it picks a unique enabled transition $\tau=(\loc_n,\phi,f_1,\dots,f_u)$, and chooses a fork $f_j$ with probability $p_j(\rv_n)$, updates to $s_{n+1}=(\loc_{n+1},\pv_{n+1})$ where $\loc_{n+1}=\loc_j$ and $\pv_{n+1}=\mathit{Up}_j(\pv_n,\rv_n)$ with $\rv_n\sim\mathcal{D}$.

\noindent\textbf{Invariants.} Given a PTS $\Pi$, an \emph{invariant} is a function $I$ that assigns to each location $\loc\in L$, a predicate $I(\loc)$ over program variables. A state $(\loc,\pv)$ is \emph{reachable} if there exists a trace $\pi=(s_0,s_1,\dots)$ such that $s_n=(\loc,\pv)$ for some $n\in\Nset$. An invariant $I$ over-approximates the reachable states.

The execution of the PTS $\Pi$  gives rises to a probability space over the program traces via their probabilistic executions described above and standard constructions such as general state space Markov chains~\cite{meyn2012markov}. We denote the probability measure in this probability space by $\probm_{\valin}(-)$.

\subsection{Temporal Posterior Inference}
Consider a probabilistic program $P$ and its PTS $\Pi$. Let $AP=\{p_0,p_1,\dots,p_N\}$ be a finite set of atomic propositions that are predicates over program variables of the form $exp(\cdot)\ge 0$, where $exp:\Rset^n\rightarrow \Rset$.
Define a labeling function $\mathit{Lb}:S\rightarrow 2^{AP}$ such that for any trace $\pi=(s_0,s_1,\dots)$ of $\Pi$, there is a corresponding word $\sigma_{\pi}=(\mathit{Lb}(s_0),\mathit{Lb}(s_1),\dots)$ where each $\mathit{Lb}(s_t)\in 2^{AP}$. Let $\mathit{Words}(\Pi)$ be the set of all words induced by $\Pi$. 

Given an $\omega$-regular property $\varphi$ with its language $\mathcal{L}(\varphi)$, $\Pi\models \varphi$ iff $\mathit{Words}(\Pi)\subseteq \mathcal{L}(\varphi)$. Then we formalize the notion of temporal posterior distributions.
\begin{definition}[TPD]\label{def:tpd}
	The \emph{temporal posterior distribution} (TPD) $\tposterior_\Pi$ of the PTS $\Pi$ is defined by:
	\begin{align}\label{eq:TPD}
		\tposterior_{\Pi}(\varphi,\psi):=\frac{Z_\Pi(\varphi,\psi)}{Z_\Pi(\psi)}=\frac{\int_\calV \probm_{\valin}(\Pi\models\varphi \wedge  \psi ) \mu_{\mathrm{init}}(\mathrm{d} \valin)   }{\int_\calV \probm_{\valin}(\Pi\models\psi)\mu_{\mathrm{init}}(\mathrm{d} \valin) }
	\end{align}	
where $\varphi,\psi$ are two $\omega$-regular properties, $\mu_{\mathrm{init}}$ is the probability measure induced by the initial probability distribution $\mathcal{D}_0$, and $\calV:=\{ s\in S \mid s\in U \wedge  \mu_{\mathrm{init}}(U)>0  \}$ is the support of $\mu_{\mathrm{init}}$.	To be concise, we abuse the notations and write $\Pr(\varphi\mid\psi)$ to represent the TPD.	The TPD is \emph{integrable} if $0<Z_\Pi(\psi)<\infty$.
\end{definition}

\inlsec{Problem Statement} Given a PTS $\Pi$ and two $\omega$-regular properties $\varphi,\psi$, we want to find the upper and lower bounds $l,u\in [0,1]$ such that  $\tposterior_{\Pi}(\varphi,\psi)\in [l,u]$.

Note that SPI focuses on instantaneous properties at program termination. In fact, a program trace consists of a sequence of program states that describes the program's temporal behavior along with the time horizon. 
For example, in~\cref{ex:running-example}, $x=1$ keeps until program termination. 
We emphasize that our problem is non-trivial and more complex than SPI as it is difficult to quantitiavely track the infinite temporal behavior of a probabilistic program. 
According to~\cref{eq:TPD}, the problem can be reduced to find the quantitiave bounds on the satisfaction probabilities of $\omega$-regular properties, which will be discussed in the next section.

\section{Theoretical Results}\label{sec:theorem}

In this section, we present an automatic-theoretic method to make temporal reasoing about TPD feasible. Then we introduce new stochastic barrier certificates to establish quantitiave bounds on the satisfaction of $\omega$-regular properties.

\subsection{Automata-based Decomposition}

It is known that any $\omega$-regular property $\varphi$ over $AP$ can be translated into an DRA $\mathcal{A}_{\varphi}=(Q,q_0,2^{AP},\delta,\mathit{Acc})$ that accepts the same language of $\varphi$, i.e.,  $\mathcal{L}(\varphi)=\mathcal{L}(\mathcal{A}_{\varphi})$ (see~\cite{duret2022spot}).  

\inlsec{FOV and IOV-sets} Given a subset $U\subseteq Q$ of $\mathcal{A}_{\varphi}$, $U$ is \emph{finitely often visited} (FOV) and called an \emph{FOV-set} iff $\mathrm{Inf}(\rho)\cap U=\emptyset$ for any run $\rho\in\Xi$. $U$ is \emph{infinitely often visited} (IOV) and called an \emph{IOV-set} iff $\mathrm{Inf}(\rho)\cap U\neq\emptyset$ for any run $\rho\in\Xi$.

Consider a PTS $\Pi$ with a labeling function $\mathit{Lb}$ and an DRA $\mathcal{A}_{\varphi}$. We decompose the satisfaction of $\varphi$ via two types of product systems.

\inlsec{FOV and IOV products}  For a subset $U\subseteq Q$, an \emph{FOV product} of $\Pi$ and $\mathcal{A}_{\varphi}$ over the set $U$ is defined as a stochastic process $\Pi \otimes \mathcal{A}_{\varphi}^{U}=\{X_n\}_{n\in\Nset_0}$ that satisfies:
\begin{itemize}
	\item $X_0=(s_0,q_0,0)$, $s_0=(\lin,\pv_0)\in S_0$ with $\pv_0\sim\mathcal{D}_0$;
	\item $X_{n}=(s_n,q_n,l_n)$ for all $n\ge 0$, where $X_{n+1}=(s_{n+1},q_{n+1},l_{n+1})$ is the successor of $s_n$ such that $s_{n+1}$ is determined by the semantics of $\Pi$(see~\cref{sec:PPL}), $q_{n+1}=\delta(q_n,\mathit{Lb}(s_n))$, $l_{n+1}=l_n+1$ if $q_{n+1}\in U$ and $l_{n+1}=l_n$ otherwise.
	
\end{itemize}
 Likewise, an \emph{IOV product} of $\Pi$ and $\mathcal{A}_{\varphi}$ over the set $U$ is defined by a stochastic process $\Pi \otimes \mathcal{A}_{\varphi}^{U}=\{X_n\}_{n\in\Nset_0}$ that satisfies:
\begin{itemize}
	\item $X_0=(s_0,q_0,0)$, $s_0=(\lin,\pv_0)\in S_0$ with $\pv_0\sim\mathcal{D}_0$;
	\item $X_{n}=(s_n,q_n,l_n)$ for all $n\ge 0$, where $X_{n+1}=(s_{n+1},q_{n+1},l_{n+1})$ is  the successor of $s_n$ such that $s_{n+1}$ is determined by the semantcis of $\Pi$(see~\cref{sec:PPL}), $q_{n+1}=\delta(q_n,\mathit{Lb}(s_n))$, $l_{n+1}=0$ if $q_{n+1}\in U$ and $l_{n+1}=l_n+1$ otherwise.
\end{itemize}
The main difference between the two products is that we use a tick variable $l$ to count the number of entering $U$ in the FOV poduct and the number of  consecutively leaving $U$ in the IOV product, respectively.

A \emph{product trace} $\theta=(X_0,X_1,\dots)$ is an infinite sequence of product states $X_i$'s and the set of all product traces is denoted by $\Theta$. Let $\theta[2]=(q_0,q_1,\dots)$ be the corresponding run of the DRA that induced from $\theta$. Similarly, $\theta[3]=(h_0,h_1,\dots)$.
\begin{proposition}\label{prop:FOV}
Given an FOV product $\Pi \otimes \mathcal{A}_{\varphi}^{U}$, $U$ is an FOV-set, i.e., $\mathrm{Inf}(\theta[2])\cap U=\emptyset$ for any product trace $\theta\in\Theta$,  iff $\mathop{max}(\theta[3])<\infty$ for any $\theta\in\Theta$. 
\end{proposition}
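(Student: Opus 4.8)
The plan is to prove the statement trace by trace: I will show that for each fixed product trace $\theta\in\Theta$, the two conditions $\mathrm{Inf}(\theta[2])\cap U=\emptyset$ and $\max(\theta[3])<\infty$ are equivalent. Once this pointwise equivalence is in hand, the universally quantified biconditional of the proposition follows immediately, since ``holds for every $\theta$'' on the one side matches ``holds for every $\theta$'' on the other.

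First I would make the behaviour of the tick variable explicit. Writing $\theta=(X_0,X_1,\dots)$ with $X_n=(s_n,q_n,l_n)$, the FOV update rule gives $l_0=0$, and $l_{n+1}=l_n+1$ exactly when $q_{n+1}\in U$ while $l_{n+1}=l_n$ otherwise. A straightforward induction on $n$ then yields the closed form
\[
l_n=\bigl|\{\,i : 1\le i\le n,\ q_i\in U\,\}\bigr|,
\]
so the tick sequence $\theta[3]=(l_0,l_1,\dots)$ is non-decreasing in $\Nset$. Consequently $\max(\theta[3])=\sup_n l_n=\lim_{n\to\infty} l_n$ equals the total number of indices $i\ge 1$ with $q_i\in U$, i.e. the number of times the run $\theta[2]$ visits $U$ (taking the value $\infty$ when this number is unbounded). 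In particular $\max(\theta[3])<\infty$ holds if and only if the run $\theta[2]$ visits $U$ only finitely often.

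It then remains to connect ``$\theta[2]$ visits $U$ finitely often'' with ``$\mathrm{Inf}(\theta[2])\cap U=\emptyset$''. One direction is immediate by contraposition: if some $q\in\mathrm{Inf}(\theta[2])\cap U$ existed, then that state $q\in U$ would be visited infinitely often, so $U$ would be visited infinitely often. The converse is the only delicate point: assuming $\mathrm{Inf}(\theta[2])\cap U=\emptyset$, each state $q\in U$ occurs only finitely often in $\theta[2]$, and since $U\subseteq Q$ is finite, the total number of visits to $U$ is a finite sum of finitely many finite counts, hence finite. Combining both directions gives, for each fixed $\theta$, the equivalence $\mathrm{Inf}(\theta[2])\cap U=\emptyset \iff \max(\theta[3])<\infty$.

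Quantifying this per-trace equivalence over all $\theta\in\Theta$ then proves the proposition. I expect the main obstacle to be not conceptual but to lie entirely in the converse implication above: it is precisely there that the finiteness of $Q$ (hence of $U$) is essential, since without it an empty intersection with $\mathrm{Inf}(\theta[2])$ would be compatible with infinitely many visits spread over infinitely many distinct states. Everything else—the closed form for $l_n$ and the monotonicity observation identifying $\max(\theta[3])$ with the total visit count—is routine.
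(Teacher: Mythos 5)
Your proof is correct and follows the same route the paper intends: the paper's own proof merely asserts that the per-trace equivalence between $\mathrm{Inf}(\theta[2])\cap U=\emptyset$ and $\max(\theta[3])<\infty$ is ``straightforward to see,'' and your argument is precisely the detailed version of that claim. You also correctly isolate the one non-trivial ingredient the paper leaves implicit---that the converse direction relies on finiteness of $U\subseteq Q$, since the tick variable counts total visits to $U$ while $\mathrm{Inf}(\theta[2])$ only tracks states visited infinitely often---so your write-up is, if anything, more complete than the paper's.
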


\begin{proposition}\label{prop:IOV}
	Given an IOV product $\Pi \otimes \mathcal{A}_{\varphi}^{U}$, $U$ is an IOV-set, i.e., $\mathrm{Inf}(\theta[2])\cap U\neq\emptyset$ for any product trace $\theta\in\Theta$,  iff $\mathop{max}(\theta[3])<\infty$ for any $\theta\in\Theta$.
\end{proposition}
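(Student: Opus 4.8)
The plan is to first read off exactly what the tick sequence $\theta[3]=(l_0,l_1,\dots)$ records in the IOV product and then prove the two implications separately, the second being the genuinely non-trivial one. By the update rule of the IOV product, $l_n$ is reset to $0$ at every step whose DRA-component lands in $U$ and is incremented by $1$ otherwise; hence $l_n$ is precisely the length of the maximal block of consecutive states lying outside $U$ that ends at position $n$, and $\max(\theta[3])=\sup_n l_n$ is the length of the longest ``gap'' between successive visits to $U$ along the run $\theta[2]$. Thus the proposition reduces to the equivalence between ``$U$ is hit infinitely often'' and ``the gaps between hits are bounded.''

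I would dispatch the direction ``$\max(\theta[3])<\infty$ for all $\theta$ $\Rightarrow$ $U$ is an IOV-set'' by contraposition. Suppose $U$ is not an IOV-set, so some product trace $\theta$ has $\mathrm{Inf}(\theta[2])\cap U=\emptyset$. Since $U\subseteq Q$ is finite, each state of $U$ occurs only finitely often, so there is an index $N$ with $q_n\notin U$ for all $n\ge N$. From step $N$ onward the tick is never reset and increases by one at each step, so $l_n\to\infty$ and $\max(\theta[3])=\infty$, contradicting the boundedness hypothesis.

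The substantive direction is ``$U$ is an IOV-set $\Rightarrow$ $\max(\theta[3])<\infty$ for all $\theta$,'' and its crux is that a purely per-trace argument fails: a single run may hit $U$ infinitely often and yet exhibit gaps of lengths $1,2,3,\dots$, giving $\sup_n l_n=\infty$. The required bound must instead be extracted from the finiteness of $\mathcal{A}_\varphi$ together with the fact that \emph{every} run hits $U$ infinitely often. Concretely, I would introduce the directed graph $G$ on vertex set $Q\setminus U$, with an edge $q\to q'$ whenever $q'=\delta(q,a)$ for some letter $a$ and both endpoints lie outside $U$, restricted to states occurring along product traces. The key lemma is that the IOV-set property forces $G$ to be acyclic: a reachable cycle inside $Q\setminus U$ could be entered and then traversed forever, yielding a run eventually confined to $Q\setminus U$ and thus violating $\mathrm{Inf}(\theta[2])\cap U\neq\emptyset$. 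Once $G$ is acyclic, any walk in it---in particular any maximal block of consecutive non-$U$ states in any run---repeats no vertex and therefore has length at most $|Q\setminus U|\le |Q|$; this bound is uniform over all traces, so $\max(\theta[3])\le |Q|<\infty$ for every $\theta\in\Theta$.

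I expect the main obstacle to be the realizability step inside the acyclicity lemma, namely turning a reachable cycle of the DRA into an \emph{actual} product trace: the cyclic word must both reach the cycle and traverse it forever while staying outside $U$, and it has to arise as the label sequence $\mathit{Lb}(s_0),\mathit{Lb}(s_1),\dots$ of a genuine execution of $\Pi$, not merely as an abstract element of $\Sigma^\omega$. I would address this by invoking totality of $\delta$ together with the mutual-exclusiveness and exhaustiveness (no-demonic) assumptions on $\Pi$, which guarantee that a unique enabled transition---and hence the required letter---is always available, so that the looping word is induced by some product execution. With realizability secured, the finiteness of $Q$ closes the argument through the acyclicity dichotomy, which is exactly the feature absent from any single-trace reasoning and the reason the uniform bound $|Q|$ holds.
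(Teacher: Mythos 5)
Your contraposition argument for the direction ``$\mathop{max}(\theta[3])<\infty$ for all $\theta$ $\Rightarrow$ $U$ is an IOV-set'' is correct, and your diagnosis that the converse is the substantive part is also correct --- in fact the situation is worse than you suspect. The gap is your acyclicity lemma: from a reachable cycle in the non-$U$ subgraph $G$ of the DRA you cannot, in general, manufacture a product trace that traverses the cycle forever. The word fed to the DRA is not a free element of $\Sigma^\omega$; it is the label sequence $\mathit{Lb}(s_0),\mathit{Lb}(s_1),\dots$ generated by the PTS dynamics. Totality of $\delta$ and the no-demonic conditions only guarantee that every PTS trace induces a well-defined product trace; they give you no means to steer $\Pi$ into emitting, forever, the letters that keep the automaton inside the cycle. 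So ``reachable cycle in $Q\setminus U$ implies $U$ is not an IOV-set'' is unsound, and with it the uniform bound $\mathop{max}(\theta[3])\le |Q|$.

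A concrete counterexample: take the deterministic program with variables $i,p$ initialized to $(0,1)$ that repeatedly executes ``if $i<p$ then $i:=i+1$ else $(i:=0;\ p:=p+1)$'', label a state by the single proposition $a$ iff $i=0$, and take the DRA with $\delta(q,\{a\})=q_U$ and $\delta(q,\emptyset)=q_N$ for every $q$, with $U=\{q_U\}$. There is exactly one product trace $\theta$; it visits $q_U$ infinitely often, so $U$ \emph{is} an IOV-set, yet the blocks of consecutive non-$U$ states have lengths $1,2,3,\dots$, so $\mathop{max}(\theta[3])=\infty$. Here $G$ has a self-loop at $q_N$ that is used infinitely often along $\theta$, but no trace can stay in it forever --- exactly the realizability failure above, and it is unrepairable: this example falsifies the very direction you are trying to prove (the universal quantification over traces does not rescue it when the program itself enforces growing gaps). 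For comparison, the paper's own proof simply declares both directions straightforward (and even mis-states the conclusion as $\mathrm{Inf}(\theta[2])\cap U=\emptyset$), i.e., it glosses over precisely the direction you correctly isolated as problematic; only the contraposition direction holds unconditionally. Note that the paper's subsequent development does not actually need the false direction, since the later $k$-times IOV notion imposes the bound $\mathop{max}(\theta[3])\le k$ by definition rather than deriving it.
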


Recall~\cref{sec:omega-prelim},  the Rabin acceptance condition of $\mathcal{A}_{\varphi}$ is  $\mathit{Acc}=\{(E_i,F_i)\mid E_i,F_i\subseteq Q, E_i \cap F_i = \emptyset, 1\le i\le M\}$, and a run $\rho$ on a word $\sigma$ is accepting iff for some $i\in [1,M]$, $\mathrm{Inf}(\rho)\cap E_i=\emptyset$ and $\mathrm{Inf}(\rho)\cap F_i\neq\emptyset$. Note that the sets $E_i$'s shall be finitely often visited, which refers to persistence, while the sets $F_i$'s shall be infinitely often visited, which is related to recurrence.

\begin{corollary}\label{cor:FOV-IOV-omega}
If 	$\mathop{max}(\theta[3])<\infty$ for any $\theta\in\Theta$ in the FOV product $\Pi \otimes \mathcal{A}_{\varphi}^{E_i}$, then $E_i$ is an FOV-set. If $\mathop{max}(\theta[3])<\infty$ for any $\theta\in\Theta$ in the IOV product $\Pi \otimes \mathcal{A}_{\varphi}^{F_i}$, then $F_i$ is ian IOV-set. Finally, one can conclude that $\Pi\models \varphi$. 
\end{corollary}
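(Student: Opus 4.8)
The plan is to establish the three assertions in order, the first two being immediate specializations of the preceding propositions and the third being the genuine content of the corollary.

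First I would dispatch the two set-membership claims by direct instantiation. For the FOV product $\Pi \otimes \mathcal{A}_{\varphi}^{E_i}$, applying Proposition~\ref{prop:FOV} with $U = E_i$ shows that the hypothesis $\mathop{max}(\theta[3]) < \infty$ for all $\theta \in \Theta$ is equivalent to $E_i$ being an FOV-set, i.e. $\mathrm{Inf}(\theta[2]) \cap E_i = \emptyset$ for every product trace $\theta$; here I invoke only the ``if'' direction of that biconditional. Symmetrically, applying Proposition~\ref{prop:IOV} with $U = F_i$ to the IOV product $\Pi \otimes \mathcal{A}_{\varphi}^{F_i}$ yields that $F_i$ is an IOV-set, i.e. $\mathrm{Inf}(\theta[2]) \cap F_i \neq \emptyset$ for every product trace.

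The main step is the passage to $\Pi \models \varphi$. The key observation is that, for a fixed trace $\pi = (s_0, s_1, \dots)$ of $\Pi$, the two product constructions share the same first two coordinates $(s_n, q_n)$ and differ only in the bookkeeping tick variable $l_n$; in particular the induced DRA run $\rho = \theta[2]$ attached to $\pi$ is the same object in both products, since $q_{n+1} = \delta(q_n, \mathit{Lb}(s_n))$ does not depend on $l_n$. Consequently the two conclusions above apply simultaneously to this single run: $\mathrm{Inf}(\rho) \cap E_i = \emptyset$ and $\mathrm{Inf}(\rho) \cap F_i \neq \emptyset$. By the Rabin acceptance condition recalled in Section~\ref{sec:omega-prelim}, the index $i$ witnesses that $\rho$ is accepting, so the word $\sigma_{\pi}$ is accepted by $\mathcal{A}_{\varphi}$. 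Since $\mathcal{A}_{\varphi}$ is deterministic, $\rho$ is the unique run on $\sigma_{\pi}$, and as $\pi$ ranges over all traces of $\Pi$ we obtain $\mathit{Words}(\Pi) \subseteq \mathcal{L}(\mathcal{A}_{\varphi}) = \mathcal{L}(\varphi)$, which is exactly $\Pi \models \varphi$.

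The step I expect to require the most care is purely one of reconciliation rather than calculation: making precise that the product-trace formulation of FOV/IOV (quantifying over $\theta \in \Theta$, i.e. over runs \emph{induced by traces of $\Pi$}) lines up with the Rabin condition, and that the run $\theta[2]$ genuinely coincides across the two different stochastic processes. I would make this explicit by noting that the tick variable is a deterministic function of the run prefix and never feeds back into the transition structure of either $(s_n)_n$ or $(q_n)_n$, so the collection of induced runs is identical in the FOV and IOV products; this is what licenses combining the two acceptance ingredients on one and the same run. I would also remark that the argument fixes a single Rabin pair $(E_i,F_i)$ and shows it witnesses acceptance for \emph{every} induced run, which is stronger than—and hence implies—the disjunctive Rabin requirement ``for some $i$''.
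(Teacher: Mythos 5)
Your proposal is correct and takes essentially the same route as the paper's own proof: instantiate \cref{prop:FOV} with $U=E_i$ and \cref{prop:IOV} with $U=F_i$, then combine the two conclusions via the Rabin acceptance condition to get $\mathit{Words}(\Pi)\subseteq\mathcal{L}(\mathcal{A}_\varphi)=\mathcal{L}(\varphi)$. Your write-up is in fact more careful than the paper's two-line argument, since you make explicit the step the paper leaves implicit---that the induced DRA run $\theta[2]$ is the same object in both product constructions because the tick variable never feeds back into the transition structure, which is what licenses applying both acceptance ingredients to one and the same run.
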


\subsection{Quantitative Temporal Verification}\label{sec:quan-sto-NNCS}

In practice, it is not easy to determine whether the condition ``$\mathop{max}(\theta[3])<\infty$'' holds in~\cref{cor:FOV-IOV-omega}. To faciliate the calculation, below we present the variants of FOV and IOV-sets. 

\noindent
\textbf{$k$-times FOV and IOV-Sets.} Let $k\in\Nset_0$ be a non-negative constant. A set $U\subseteq Q$ is a \emph{$k$-times FOV-set} iff for any $\theta\in \Theta$, $U$ is visited at most $k$ times, i.e., $\mathop{max}(\theta[3])\le k$. Similarly, a set $U\subseteq Q$ is a \emph{$k$-times IOV-set} iff for all $\theta\in\Theta$, $Q\setminus U$ is visited consecutively at most $k$ times, i.e., $\mathop{max}(\theta[3])\le k$.

Fix a PTS $\Pi$ with a labeling function $\mathit{Lb}$ and an DRA $\mathcal{A}_\varphi$ with a set $U\subseteq Q$. 
Given a state $s=(\loc,\pv)$ of $\Pi$, its successor state $s'$ is computed as follows: (1) pick the unique enabled transition  at $s$,  $\tau=(\loc,\phi,f_1,\dots,f_{u})$, which has $u\ge 1$ forks of the form $f_j=(p_j,\mathit{Up}_j,\loc_j)$; (2) choose the $j$-th fork with probability $p_j(\rv)$ and then $s'=g(s,\rv)=\mathit{Up}_j(\pv,\rv)$ with $\rv\sim\mathcal{D}$ if $\loc\neq\lout$, otherwise $s'=g(s,\rv)=s$. Given a product state $(s,q)$ of $\Pi\otimes\mathcal{A}_{\varphi}^U$, its sucessor state is $(s',q')$ where $q'=\delta(q,\mathit{Lb}(s))$.

\begin{theorem}[Lower Bounds on $k$-times FOV-sets]\label{thm:lower-FOV}
Suppose there exists  a barrier certificate (BC) $\eta:S\times Q\times \Nset\rightarrow \Rset$ of the FOV product $\Pi \otimes \mathcal{A}_{\varphi}^{U}$ such that for two constants $\gamma\in (0,1)$ and $k\in\Nset_0$, the following conditions hold:
	\begin{align}
		& \eta(s,q,l)\ge 0 & \forall s\in S,q\in Q,l\in\Nset_0 \label{eq:SLPBC-nonnegative} \\
		& \eta(s,q_0,0)\le \gamma & \forall s\in S_0 \label{eq:SLPBC-initial} \\
		& \eta(s,q,k+1)\ge 1  & \forall s\in S, q\in U \label{eq:SLPBC-last} \\
		& \expv_{\rv\sim\mathcal{D}}[\eta(g(s,\rv),q',l') ] \le \eta(s,q,\ell)  & \forall s\in S, q\in Q, l\in [0,k] \label{eq:SLPBC-nonincrease}
	\end{align}
	where $l'=l+1$ if $q'\in U$ and $l$ otherwise.
	Then for any initial product state $(s_0,q_0,0)$ where $s_0\in S_0$, we have that $ \probm_{s_0} [ U \text{ is a } k \text{-times FOV-set}  ]\ge 1-\eta(s_0,q_0,0)=:l_i^{\mathit{fin}}$.

\end{theorem}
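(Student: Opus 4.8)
The plan is to recognize \cref{eq:SLPBC-nonincrease} as a supermartingale condition and bound the probability that the tick variable ever reaches $k+1$ by a standard maximal-inequality argument.

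First I would reformulate the target event. By definition, a product trace $\theta$ witnesses that $U$ fails to be a $k$-times FOV-set exactly when $\mathop{max}(\theta[3])>k$, i.e. when the tick variable $l_n$ attains the value $k+1$ at some finite step. Accordingly, define the stopping time $T:=\inf\{n\in\Nset_0 : l_n=k+1\}$ with respect to the natural filtration $\{\mathcal{F}_n\}$ of the FOV product process $\{X_n\}$; since in the FOV product $l$ is nondecreasing and increments by at most one, the first time it exceeds $k$ it equals precisely $k+1$, and $\{U \text{ is not a } k\text{-times FOV-set}\}=\{T<\infty\}$. Thus the claim is equivalent to $\probm_{s_0}[T<\infty]\le \eta(s_0,q_0,0)$.

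Next I would form the stopped process $M_n:=\eta(X_{n\wedge T})$ and show it is a nonnegative supermartingale. Nonnegativity is immediate from \cref{eq:SLPBC-nonnegative}. For the supermartingale inequality, note that for every $n<T$ the tick component satisfies $l_n\in[0,k]$; since $\{X_n\}$ is Markov and its one-step transition (sampling $\rv\sim\mathcal{D}$, updating $s$ via $g$ and $q$ via $\delta$, and incrementing $l$ precisely when the new automaton state lies in $U$) is exactly the transition producing $(g(s,\rv),q',l')$, the conditional expectation $\condexpv{\eta(X_{n+1})}{\mathcal{F}_n}$ coincides with the left-hand side of \cref{eq:SLPBC-nonincrease} evaluated at $X_n$, hence is bounded by $\eta(X_n)=M_n$ on $\{n<T\}$. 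On $\{n\ge T\}$ the process is frozen, so the inequality is trivial; therefore $\condexpv{M_{n+1}}{\mathcal{F}_n}\le M_n$ throughout.

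The decisive step is to bound $M$ from below on the failure event. On $\{T<\infty\}$ the tick increments from $k$ to $k+1$ at step $T$, which by the FOV update rule happens only when the automaton enters $U$; thus $q_T\in U$ and $l_T=k+1$, so \cref{eq:SLPBC-last} gives $M_T=\eta(s_T,q_T,k+1)\ge 1$, and since $M$ is frozen after $T$ we get $\{T<\infty\}\subseteq\{\sup_n M_n\ge 1\}$. Applying Ville's maximal inequality to the nonnegative supermartingale $M$ with threshold $1$ and deterministic initial value $M_0=\eta(s_0,q_0,0)$ yields $\probm_{s_0}[\sup_n M_n\ge 1]\le \eta(s_0,q_0,0)$. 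Combining the two facts gives $\probm_{s_0}[T<\infty]\le \eta(s_0,q_0,0)$, and taking complements produces $\probm_{s_0}[U\text{ is a }k\text{-times FOV-set}]\ge 1-\eta(s_0,q_0,0)=l_i^{\mathit{fin}}$. Note that \cref{eq:SLPBC-initial} is not needed for this pointwise bound; it only guarantees the uniform weaker bound $1-\gamma$ across all initial states.

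The main obstacle I anticipate is the careful measure-theoretic bookkeeping that the product process is Markov with a one-step kernel matching \cref{eq:SLPBC-nonincrease}: in particular, verifying measurability of $s\mapsto \expv_{\rv\sim\mathcal{D}}[\eta(g(s,\rv),q',l')]$ and confirming that the fork choice (taken with probability $p_j(\rv)$) is correctly subsumed into the single expectation $\expv_{\rv\sim\mathcal{D}}[\cdot]$, so that the conditional-expectation identity for $M_n$ is exact rather than merely heuristic. The remaining ingredient is citing the precise form of Ville's inequality for nonnegative supermartingales; optional stopping is not directly required, since the maximal inequality applied to the stopped process already suffices.
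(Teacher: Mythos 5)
Your proposal is correct and follows essentially the same route as the paper's own proof: both freeze the process $\eta(X_n)$ at the first time the tick variable reaches $k+1$, verify that the stopped process is a nonnegative supermartingale via \cref{eq:SLPBC-nonnegative} and \cref{eq:SLPBC-nonincrease}, use \cref{eq:SLPBC-last} to place the failure event inside $\{\sup_n \eta(X_{n\wedge T})\ge 1\}$, and conclude by Ville's inequality. Your added observations—that \cref{eq:SLPBC-initial} is not needed for the pointwise bound and that the Markov-kernel identification behind \cref{eq:SLPBC-nonincrease} deserves explicit justification—are accurate refinements of the paper's argument rather than deviations from it.
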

\noindent
\emph{Intuition.} Note that \cref{eq:SLPBC-nonincrease} is a supermartingale-type condition. This ensures that the BC is non-increasing
in expectation at each time step, which can be utilized to establish lower bounds on the satisfaction probability by Ville's Inequality~\cite{ville1939etude}.

\begin{theorem}[Upper Bounds on $k$-times FOV-sets]\label{thm:upper-FOV}
	Suppose there exists  a barrier certificate $\eta:S\times Q\times \Nset\rightarrow \Rset$ of the FOV product $\Pi \otimes \mathcal{A}_{\varphi}^{U}$ such that for some constants $\alpha\in (0,1)$, $0\le \lambda<\gamma\le 1$ and $k\in\Nset$, the following conditions hold:
\begin{align}
	& 0\le \eta(s,q,l)\le 1 & \forall s\in S,q\in Q,l\in\Nset_0 \label{eq:SUPBC-boundedness}\\
	& \eta(s,q_0,0)\ge \gamma & \forall s\in S_0 \label{eq:SUPBC-initial} \\
	& \eta(s,q,k+1)\le \lambda  & \forall s\in S, q\in U\label{eq:SUPBC-last} \\
	& \alpha\cdot \expv_{\rv\sim\mathcal{D}}[\eta(g(s,\rv),q',l') ] \ge \eta(s,q,\ell)  & \forall s\in S, q\in Q, l\in [0,k] \label{eq:SUPBC-increase}
\end{align}
	where $l'=l+1$ if $q'\in U$ and $l$ otherwise.
	Then for any initial product state $(s_0,q_0,0)$ where $s_0\in S_0$, we have that $\probm_{s_0} [U \text{ is a } k \text{-times FOV-set} ]\le 1-\eta(s_0,q_0,0)=:u_i^{\mathit{fin}}$.
	
\end{theorem}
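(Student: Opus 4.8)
The plan is to dualize the argument behind \cref{thm:lower-FOV}: instead of bounding from above the probability of \emph{staying} below $k+1$ ticks by a supermartingale, I lower bound the probability of \emph{reaching} tick level $k+1$. Write the event ``$U$ is a $k$-times FOV-set'' as the survival event $\{T=\infty\}$, where $T:=\inf\{n\in\Nset_0\mid l_n=k+1\}$ is the first time the tick variable of the FOV product hits $k+1$; since $l_n$ is non-decreasing we have $\{T>n\}=\{l_n\le k\}$. Because $\probm_{s_0}[U\text{ is a }k\text{-times FOV-set}]=\probm_{s_0}[T=\infty]=1-\probm_{s_0}[T<\infty]$, it suffices to prove the reachability lower bound $\probm_{s_0}[T<\infty]\ge \eta(s_0,q_0,0)$.

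First I would read \cref{eq:SUPBC-increase} as a multiplicative growth condition valid precisely while $l_n\le k$, i.e.\ for $n<T$: $\expv[\eta(X_{n+1})\mid\mathcal{F}_n]\ge \alpha^{-1}\eta(X_n)\ge \eta(X_n)$ since $\alpha\in(0,1)$. Consequently the stopped process $Y_n:=\eta(X_{n\wedge T})$ is a submartingale that is bounded in $[0,1]$ by \cref{eq:SUPBC-boundedness}. The submartingale property then gives $\expv[Y_n]\ge \expv[Y_0]=\eta(s_0,q_0,0)$ for every $n$, and by the bounded martingale convergence theorem $Y_n$ converges almost surely and in $L^1$ to a limit $Y_\infty$ with $\expv[Y_\infty]\ge \eta(s_0,q_0,0)$.

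Next I would identify $Y_\infty$ on the two complementary events. On $\{T<\infty\}$ we have $Y_\infty=\eta(X_T)$, where $X_T$ sits at tick level $k+1$ with DRA-state in $U$, so boundedness (and, for a tighter estimate, \cref{eq:SUPBC-last}) controls this contribution. On the survival event $\{T=\infty\}$ the strict multiplicative growth $\expv[\eta(X_{n+1})\mid\mathcal{F}_n]\ge \alpha^{-1}\eta(X_n)$ must be reconciled with the uniform bound $\eta\le 1$: a bounded submartingale whose conditional increments are at least $(\alpha^{-1}-1)\eta(X_n)\ge 0$ cannot keep $\eta$ bounded away from $0$, so $\eta(X_n)\to 0$ almost surely there. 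Making this precise (for instance by checking that $\alpha^{\,n}\eta(X_n)$ is a submartingale tending to $0$, or via an optional-stopping estimate on the expected number of steps spent below $k+1$ ticks) pins $Y_\infty$ to $0$ on $\{T=\infty\}$. Substituting the two cases into $\expv[Y_\infty]\ge \eta(s_0,q_0,0)$ isolates $\probm_{s_0}[T<\infty]$ and yields $\probm_{s_0}[T<\infty]\ge \eta(s_0,q_0,0)$, which rearranges to the claimed bound $\probm_{s_0}[U\text{ is a }k\text{-times FOV-set}]\le 1-\eta(s_0,q_0,0)$.

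The main obstacle I anticipate is the passage to the limit at the unbounded stopping time $T$, and in particular the step that forces $\eta(X_n)\to 0$ on the survival event from the geometric growth factor $1/\alpha$ together with the bound $\eta\le 1$. This is the crux that flips the inequality into an \emph{upper} bound on the FOV probability, and it is where the constant $\alpha\in(0,1)$ (rather than a plain submartingale, as in \cref{thm:lower-FOV}) is essential. A secondary point to handle carefully is the bookkeeping at the hitting time via \cref{eq:SUPBC-initial} and \cref{eq:SUPBC-last}: the realized terminal value must be combined with the initial lower bound $\gamma$ so that the two-case decomposition closes to give exactly $1-\eta(s_0,q_0,0)$, with $0\le\lambda<\gamma\le1$ playing the auxiliary role of sharpening the terminal contribution rather than merely invoking boundedness.
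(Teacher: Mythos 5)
Your overall strategy is viable and genuinely different from the paper's, but it has a real gap at exactly the step you flag as the crux, and neither of the two devices you propose for closing it actually works. The claim you need --- $\eta(X_n)\to 0$ almost surely on the survival event $\{T=\infty\}$ --- is true, but: (a) checking that $\alpha^{n}\eta(X_n)$ (stopped at $T$) is a submartingale tending to $0$ on $\{T=\infty\}$ proves nothing about $\eta(X_n)$ itself, since that discounted process tends to $0$ trivially there ($\eta\le 1$ and $\alpha^n\to 0$); and (b) there is no ``optional-stopping estimate on the expected number of steps spent below $k+1$ ticks,'' because that expected number is $\expv[T]$, which is infinite whenever $\probm_{s_0}[T=\infty]>0$ --- precisely the cases the theorem is meant to address. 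Here is how to close the gap within your decomposition: by \cref{eq:SUPBC-increase}, the stopped process $Y_n=\eta(X_{n\wedge T})$ satisfies $\expv[Y_{n+1}-Y_n\mid\mathcal{F}_n]\ge(\alpha^{-1}-1)\,\eta(X_n)\,\mathbb{I}_{\{n<T\}}$; taking expectations, summing over $n$, and using $0\le Y_n\le 1$ from \cref{eq:SUPBC-boundedness} gives $\sum_{n\ge 0}\expv\bigl[\eta(X_n)\mathbb{I}_{\{n<T\}}\bigr]\le(\alpha^{-1}-1)^{-1}<\infty$, hence $\sum_{n\ge 0}\eta(X_n)\mathbb{I}_{\{n<T\}}<\infty$ almost surely, which forces $\eta(X_n)\to 0$ a.s.\ on $\{T=\infty\}$. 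With that lemma, your chain $\probm_{s_0}[T<\infty]\ge\expv[Y_\infty]\ge\eta(s_0,q_0,0)$ closes and the theorem follows; your earlier steps (the undiscounted stopped process is a bounded submartingale because $\alpha^{-1}\ge 1$ and $\eta\ge 0$, so $\expv[Y_\infty]=\lim_n\expv[Y_n]\ge\eta(s_0,q_0,0)$) are fine.

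For comparison, the paper never needs any statement about $\eta(X_n)$ on the survival event: it works throughout with the discounted process $Y_t=\alpha^{t}\eta(X_t)$ stopped at $\kappa=T$, shows it is a bounded submartingale via \cref{eq:SUPBC-increase}, and applies the Optional Stopping Theorem; since $\alpha^{\kappa}\eta(X_\kappa)=0$ on $\{\kappa=\infty\}$, the survival contribution is annihilated by the discount factor for free, yielding $\probm_{s_0}[\kappa<\infty]\ge\expv[\alpha^{\kappa}\eta(X_\kappa)]\ge\eta(s_0,q_0,0)$ directly. So your suggestion (a) is really a pointer to the paper's proof: it bypasses, rather than proves, your pointwise-convergence claim, and using it forces you to abandon the undiscounted decomposition altogether. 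A final small correction: \cref{eq:SUPBC-initial,eq:SUPBC-last} are not needed for the stated bound (the paper's proof uses only \cref{eq:SUPBC-boundedness,eq:SUPBC-increase}), so your closing remark that $0\le\lambda<\gamma\le 1$ is needed to make the two-case decomposition ``close to give exactly $1-\eta(s_0,q_0,0)$'' overstates their role; $\eta(X_T)\le 1$ alone suffices, and those conditions only sharpen the bound or certify non-vacuousness.
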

\noindent
\emph{Intuition.}
\Cref{eq:SUPBC-increase} is a submartingale-type condition, which requires the expectation of the BC should increase after one time step. We use it to prove the upper bound on the satisfaction probability via the Optional Stopping Theorem (OST)~\cite{williams1991}.

\begin{theorem}[Lower Bounds on $k$-times IOV-sets]\label{thm:lower-IOV}
	Suppose there exists  a barrier certificate $h:S\times Q\times \Nset\rightarrow \Rset$ of the IOV product $\Pi \otimes \mathcal{A}_{\varphi}^{U}$ such that for two constants $\gamma\in (0,1)$ and $k\in\Nset$ , the following conditions hold:
\begin{align}
	& h(s,q,l)\ge 0 & \forall s\in S,q\in Q,l\in\Nset_0 \label{eq:SLRBC-nonnegative} \\
	& h(s,q_0,0)\le \gamma & \forall s\in S_0 \label{eq:SLRBC-initial} \\
	& h(s,q,k+1)\ge 1  & \forall s\in S, q\in Q\setminus U \label{eq:SLRBC-last} \\
	& \expv_{\rv\sim\mathcal{D}}[h(g(s,\rv),q',l') ] \le h(s,q,l)  & \forall s\in S, q\in Q, l\in [0,k] \label{eq:SLRBC-nonincrease}
\end{align}
	where $l'=0$ if $q'\in U$ and $l'=l+1$ otherwise.
	Then for any initial product state $(s_0,q_0,0)$ where $s_0\in S_0$, we have that $\probm_{s_0} [U \text{ is a } k \text{-times IOV-set}  ]\ge 1-h(s_0,q_0,0)=:l_i^{\mathit{inf}}$.
	
\end{theorem}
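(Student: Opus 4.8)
The plan is to mirror the argument behind the lower bound for $k$-times FOV-sets (\cref{thm:lower-FOV}), exploiting the symmetry between the two products. In the IOV product the tick variable $\theta[3]$ counts the number of \emph{consecutive} visits to $Q\setminus U$, and by the definition of a $k$-times IOV-set the target event ``$U$ is a $k$-times IOV-set'' is exactly $\{\max(\theta[3])\le k\}$. I would introduce the process $M_n := h(X_n) = h(s_n,q_n,l_n)$ along a product trace, together with the stopping time $\tau := \inf\{n\in\Nset_0 : l_n = k+1\}$, i.e.\ the first time the tick variable reaches $k+1$. The complement of the target event is then precisely $\{\tau<\infty\}$, so it suffices to prove $\probm_{s_0}[\tau<\infty]\le h(s_0,q_0,0)$ and take complements.

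First I would check that the stopped process $(M_{n\wedge\tau})_{n\ge 0}$ is a non-negative supermartingale with respect to the natural filtration $\mathcal{F}_n=\sigma(X_0,\dots,X_n)$. Non-negativity is immediate from~\cref{eq:SLRBC-nonnegative}. For the supermartingale inequality, observe that on $\{n<\tau\}$ the tick satisfies $l_n\le k$, so the expected-decrease condition~\cref{eq:SLRBC-nonincrease} applies with the IOV update rule ($l'=0$ when $q'\in U$ and $l'=l+1$ otherwise) and yields $\condexpv{M_{n+1}}{\mathcal{F}_n}\le M_n$; on $\{n\ge\tau\}$ the stopped process is frozen, so the inequality is trivial.

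Next I would apply Ville's inequality to this non-negative supermartingale at threshold $1$, obtaining $\probm_{s_0}[\sup_n M_{n\wedge\tau}\ge 1]\le \expv[M_0]=h(s_0,q_0,0)$. The linking step is to bound $M_\tau$ from below on $\{\tau<\infty\}$: the tick increments to $k+1$ only on a step whose automaton state lies in $Q\setminus U$ (a visit to $U$ would reset it to $0$), so $q_\tau\in Q\setminus U$ and $l_\tau=k+1$; hence~\cref{eq:SLRBC-last} gives $M_\tau=h(s_\tau,q_\tau,k+1)\ge 1$. Therefore $\{\tau<\infty\}\subseteq\{\sup_n M_{n\wedge\tau}\ge 1\}$, and Ville's bound gives $\probm_{s_0}[\tau<\infty]\le h(s_0,q_0,0)$. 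Taking complements yields $\probm_{s_0}[U\text{ is a }k\text{-times IOV-set}]=\probm_{s_0}[\tau=\infty]\ge 1-h(s_0,q_0,0)$, as required. Note that~\cref{eq:SLRBC-initial} is not needed for the inequality itself; it only guarantees the derived bound is at least $1-\gamma>0$.

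The main obstacle, though largely bookkeeping, is the careful handling of the stopping time: I must invoke~\cref{eq:SLRBC-nonincrease} only where it is valid, namely for $l\le k$, which holds precisely before $\tau$, and I must read off $q_\tau\in Q\setminus U$ correctly from the IOV tick-update rule so that~\cref{eq:SLRBC-last}---stated only for $q\in Q\setminus U$---can be applied at the stopping time. Once these two points are pinned down, the remainder is identical in structure to~\cref{thm:lower-FOV} with the roles of $U$ and $Q\setminus U$ interchanged.
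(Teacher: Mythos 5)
Your proposal is correct and follows essentially the same route as the paper's own proof: the stopped process at the first time the tick variable reaches $k+1$, the supermartingale property from \cref{eq:SLRBC-nonnegative,eq:SLRBC-nonincrease}, the link to \cref{eq:SLRBC-last} at the stopping time, and Ville's inequality at threshold $1$. Your explicit justification that $q_\tau\in Q\setminus U$ (since a visit to $U$ would reset the tick) and your remark that \cref{eq:SLRBC-initial} is not needed for the bound itself are if anything slightly more careful than the paper's write-up.
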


\begin{theorem}[Upper Bounds on $k$-times IOV-sets]\label{thm:upper-IOV}
	Suppose there exists  a barrier certificate $h:S\times Q\times \Nset\rightarrow \Rset$ of the IOV product $\Pi \otimes \mathcal{A}_{\varphi}^{U}$ such that for some constants $\gamma\in (0,1)$, $0\le \lambda<\gamma\le 1$ and $k\in\Nset$  , the following conditions hold:
\begin{align}
	& 0\le h(s,q,l)\le 1 & \forall s\in S,q\in Q,l\in\Nset_0 \label{eq:SURBC-boundedness} \\
	& h(s,q_0,0)\ge \gamma & \forall s\in S_0 \label{eq:SURBC-initial} \\
	& h(s,q,k+1)\le \lambda  & \forall s\in S, q\in Q\setminus U \label{eq:SURBC-last} \\
	& \alpha\cdot \expv_{\rv\sim\mathcal{D}}[h(g(s,\rv),q',l') ] \ge h(s,q,l)  & \forall s\in S, q\in Q, l\in [0,k] \label{eq:SURBC-increase}
\end{align}
	where $l'=0$ if $q'\in U$ and $l'=l+1$ otherwise.
	Then for any initial product state $(s_0,q_0,0)$ where $s_0\in S_0$, we have that $ \probm_{s_0} [U\text{ is a } k \text{-times IOV-set} ]\le 1-h(s_0,q_0,0)=:u_i^{\mathit{inf}}$.
	
\end{theorem}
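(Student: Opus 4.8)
The plan is to recast the target probability as a hitting-time probability and then control it with a discounted submartingale together with the Optional Stopping Theorem, mirroring the strategy announced for \cref{thm:upper-FOV}. Recall that in the IOV product the tick $l$ counts the number of consecutive steps spent in $Q\setminus U$ and resets to $0$ on every visit to $U$; hence ``$U$ is a $k$-times IOV-set'' for a product trace is precisely the event that $l$ never attains the value $k+1$. I would therefore introduce the stopping time $T:=\inf\{n\in\Nset_0 : l_n=k+1\}$ relative to the natural filtration $\mathcal F_n=\sigma(X_0,\dots,X_n)$, so that $\{U\text{ is a }k\text{-times IOV-set}\}=\{T=\infty\}$. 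Proving the claimed bound $\probm_{s_0}[T=\infty]\le 1-h(s_0,q_0,0)$ is then equivalent to establishing the lower bound $\probm_{s_0}[T<\infty]\ge h(s_0,q_0,0)$.

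First I would turn the discounted increase condition \eqref{eq:SURBC-increase} into a submartingale. Because the successor distribution of a product state is fixed by the transition structure of $\Pi\otimes\mathcal A_\varphi^U$, \eqref{eq:SURBC-increase} yields $\expv[h(X_{n+1})\mid\mathcal F_n]\ge \alpha^{-1}h(X_n)$ at every state with $l\le k$. Setting $Y_n:=\alpha^n h(X_n)$, a short computation gives $\expv[Y_{n+1}\mid\mathcal F_n]\ge Y_n$ on $\{T>n\}$, so the stopped process $Y_{n\wedge T}$ is a submartingale, and it is bounded since $0\le Y_n\le\alpha^n\le 1$ by \eqref{eq:SURBC-boundedness}. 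The restriction $l\in[0,k]$ in \eqref{eq:SURBC-increase} is exactly what makes this valid: for $n<T$ we have $l_n\le k$, so the submartingale inequality holds up to the stopping time.

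Next I would apply the Optional Stopping Theorem to the bounded stopping time $T\wedge n$, obtaining $\expv[Y_{T\wedge n}]\ge Y_0=h(s_0,q_0,0)$ for every $n$, and then pass to the limit $n\to\infty$ by bounded convergence (legitimate because $0\le Y_{T\wedge n}\le 1$). I would analyze the pointwise limit on the two events separately: on $\{T=\infty\}$ we have $Y_{T\wedge n}=Y_n\le\alpha^n\to 0$ since $\alpha\in(0,1)$, so this event contributes nothing; on $\{T<\infty\}$ we have $Y_{T\wedge n}\to Y_T=\alpha^T h(X_T)$. Since $l_T=k+1>0$ can only be produced by the update branch $l_T=l_{T-1}+1$, the DRA component satisfies $q_T\in Q\setminus U$, so \eqref{eq:SURBC-last} gives $h(X_T)\le\lambda$ and hence $Y_T\le\lambda$. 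Combining the two directions, $h(s_0,q_0,0)\le\expv[Y_T\mathbf 1_{T<\infty}]\le\lambda\,\probm_{s_0}[T<\infty]$, whence $\probm_{s_0}[T<\infty]\ge h(s_0,q_0,0)/\lambda\ge h(s_0,q_0,0)$ because $\lambda<\gamma\le 1$. This is the desired inequality, and the initial condition \eqref{eq:SURBC-initial} merely certifies that the bound $u_i^{\mathit{inf}}$ does not exceed $1-\gamma$.

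The step I expect to be the main obstacle is the limit passage: the whole argument hinges on the strict discount $\alpha<1$, which is precisely what drives $Y_n\le\alpha^n$ to $0$ and thereby kills the contribution of $\{T=\infty\}$ — without it one could not conclude. The only other point requiring care is the structural observation that reaching $l=k+1$ forces $q_T\in Q\setminus U$, which is what licenses the terminal estimate \eqref{eq:SURBC-last}; the integrability needed for OST and the interchange of limit and expectation are then routine consequences of the uniform boundedness \eqref{eq:SURBC-boundedness} of $h$.
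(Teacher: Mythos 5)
Your proof is correct and follows essentially the same route as the paper's: the discounted process $Y_n=\alpha^n h(X_n)$, the stopping time at the first tick value $k+1$, the stopped-process submartingale argument via OST, and the observation that $\alpha^n\to 0$ kills the contribution of $\{T=\infty\}$. The only differences are cosmetic — the paper bounds the stopped value by $1$ using \eqref{eq:SURBC-boundedness} instead of by $\lambda$ using \eqref{eq:SURBC-last}, and your intermediate division by $\lambda$ should be dropped since $\lambda=0$ is permitted; the chain $h(s_0,q_0,0)\le\lambda\,\probm_{s_0}[T<\infty]\le\probm_{s_0}[T<\infty]$ already yields the claim.
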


Below we give the convergence theorem about the satisfaction of  $\omega$-regular properties under $k$-times FOV and IOV-sets. Let $(E_i,F_i)\in Acc$ be one element of the Rabin acceptance condition in $\mathcal{A}_\varphi$. $E_i$ is an	FOV-set such that $\mathop{max}(\theta[3])=m^*<\infty$ for any $\theta\in\Theta$, and $F_i$ is an IOV-set such that $\mathop{max}(\theta[3])=n^*<\infty$ for any $\theta\in\Theta$.

\begin{theorem}[Convergence]\label{thm:convergence}
For any initial state $s_0\in S_0$, when $k\rightarrow m^*$ in~\cref{thm:lower-FOV,thm:upper-FOV}, we have that $\probm_{s_0}[E_i \text{ is an FOV-set} ] \in [l_i^{\mathit{fin}},u_i^{\mathit{fin}}]$; when $k\rightarrow n^*$ in \cref{thm:lower-IOV,thm:upper-IOV}, we have that $\probm_{s_0}[F_i \text{ is an IOV-set} ] \in [l_i^{\mathit{inf}},u_i^{\mathit{inf}}]$. Hence, $\probm_{s_0}[\Pi\models\varphi]\in [ l_i^{\mathit{fin}}\cdot l_i^{\mathit{inf}},  u_i^{\mathit{fin}}\cdot u_i^{\mathit{inf}}]$.
\end{theorem}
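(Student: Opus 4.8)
The plan is to reduce the two unrestricted events---``$E_i$ is an FOV-set'' and ``$F_i$ is an IOV-set''---to increasing limits of their $k$-times counterparts, apply the per-$k$ interval bounds from \cref{thm:lower-FOV,thm:upper-FOV,thm:lower-IOV,thm:upper-IOV}, pass to the limit by continuity of the probability measure, and finally combine the persistence and recurrence intervals into a bound on $\probm_{s_0}[\Pi\models\varphi]$ via \cref{cor:FOV-IOV-omega}.

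For the persistence part I would fix the pair $(E_i,F_i)$ and set $A_k:=\{\theta\in\Theta\mid\max(\theta[3])\le k\}$ in the FOV product $\Pi\otimes\mathcal{A}_\varphi^{E_i}$. By definition $A_k$ is exactly the event ``$E_i$ is a $k$-times FOV-set'', and since $\max(\theta[3])\le k$ implies $\max(\theta[3])\le k+1$, the family $\{A_k\}_{k\ge0}$ is nondecreasing. By \cref{prop:FOV} a trace has $\mathrm{Inf}(\theta[2])\cap E_i=\emptyset$ iff $\max(\theta[3])<\infty$, so $\bigcup_{k\ge0}A_k$ equals the event ``$E_i$ is an FOV-set'', and continuity of measure from below gives $\probm_{s_0}[E_i\text{ is an FOV-set}]=\lim_{k\to\infty}\probm_{s_0}[E_i\text{ is a }k\text{-times FOV-set}]$. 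Since the uniform bound $m^*$ is finite by hypothesis, the sequence stabilizes at $k=m^*$, and substituting the per-$k$ bounds of \cref{thm:lower-FOV,thm:upper-FOV} yields $\probm_{s_0}[E_i\text{ is an FOV-set}]\in[l_i^{\mathit{fin}},u_i^{\mathit{fin}}]$. The recurrence part is symmetric: with the analogous events in $\Pi\otimes\mathcal{A}_\varphi^{F_i}$ (where $\max(\theta[3])$ now counts consecutive absences from $F_i$), \cref{prop:IOV} identifies their union with ``$F_i$ is an IOV-set'', and the same continuity argument with \cref{thm:lower-IOV,thm:upper-IOV} and $k\to n^*$ gives $\probm_{s_0}[F_i\text{ is an IOV-set}]\in[l_i^{\mathit{inf}},u_i^{\mathit{inf}}]$.

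For the final interval I would invoke \cref{cor:FOV-IOV-omega}: a run is accepting through $(E_i,F_i)$ exactly when $E_i$ is an FOV-set and $F_i$ is an IOV-set, so $\{\Pi\models\varphi\}$ is the intersection $\mathcal{E}\cap\mathcal{I}$ of the persistence event $\mathcal{E}$ and the recurrence event $\mathcal{I}$, and I would factor $\probm_{s_0}[\mathcal{E}\cap\mathcal{I}]=\probm_{s_0}[\mathcal{E}]\cdot\probm_{s_0}[\mathcal{I}\mid\mathcal{E}]$. The hard part is justifying the product interval $[l_i^{\mathit{fin}}l_i^{\mathit{inf}},\,u_i^{\mathit{fin}}u_i^{\mathit{inf}}]$: a plain product of marginals is not a valid two-sided bound, since $\probm[\mathcal{E}\cap\mathcal{I}]\ge\probm[\mathcal{E}]\probm[\mathcal{I}]$ needs nonnegative correlation whereas $\probm[\mathcal{E}\cap\mathcal{I}]\le\probm[\mathcal{E}]\probm[\mathcal{I}]$ needs nonpositive correlation, so the interval is sound only when $\mathcal{E}$ and $\mathcal{I}$ are (conditionally) independent. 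I would establish this by exploiting that on $\mathcal{E}$ the $E_i$-counter stabilizes after an almost-surely finite time $T$, and then showing---via the strong Markov property at $T$---that the Ville/OST arguments underlying \cref{thm:lower-IOV,thm:upper-IOV} run unchanged on the shifted process, so that $\probm_{s_0}[\mathcal{I}\mid\mathcal{E}]$ still lies in $[l_i^{\mathit{inf}},u_i^{\mathit{inf}}]$ and the two factors genuinely multiply. Verifying that this conditioning does not disturb the supermartingale/submartingale hypotheses of the recurrence certificates is the technical crux; the surrounding steps are routine continuity-of-measure and interval computations.
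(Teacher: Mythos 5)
Your handling of the two per-component claims is correct and is, frankly, more of a proof than the paper supplies: the paper's entire argument for \cref{thm:convergence} reads ``the proof is straightforward; when $k$ is sufficiently large enough, we will count the contributions of all valid traces.'' Your version---writing the FOV event as the increasing union of the events $A_k$, invoking continuity of the measure from below, and observing that the union stabilizes at $k=m^*$ so that the bounds of \cref{thm:lower-FOV,thm:upper-FOV} at $k=m^*$ transfer verbatim to the limit event, with the symmetric argument on the IOV side via \cref{prop:IOV}---is exactly the argument the paper leaves implicit. (One caveat: under the paper's literal hypothesis that $\max(\theta[3])\le m^*$ for \emph{every} $\theta\in\Theta$, both events already have probability one and the claims are trivial; your implicit reading of $m^*$ as the supremum of $\max(\theta[3])$ over traces where it is finite is the only reading that makes the statement non-vacuous.)

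The genuine gap is the final ``Hence,'' and it is a gap you cannot close by the route you sketch---nor does the paper close it at all. The persistence event $\mathcal{E}$ and the recurrence event $\mathcal{I}$ are tail events of the \emph{same} trace of the same product chain; which bottom component the trace is absorbed into determines both simultaneously, so they can be arbitrarily positively or negatively correlated, and no strong-Markov argument at the stabilization time restores (conditional) independence. Concretely, suppose the product chain branches at its first step with probability $\tfrac12$ into an absorbing component where $E_i$ is never visited and $F_i$ is visited infinitely often, and with probability $\tfrac12$ into one where the opposite holds: then $\probm_{s_0}[\mathcal{E}]=\probm_{s_0}[\mathcal{I}]=\tfrac12$ while $\probm_{s_0}[\mathcal{E}\cap\mathcal{I}]=\tfrac12$, which exceeds any product $u_i^{\mathit{fin}}\cdot u_i^{\mathit{inf}}$ of upper bounds tighter than $\sqrt{1/2}$; swapping the role of $F_i$ in the two components makes $\mathcal{E}$ and $\mathcal{I}$ disjoint, so $\probm_{s_0}[\mathcal{E}\cap\mathcal{I}]=0$, which is below the product $l_i^{\mathit{fin}}\cdot l_i^{\mathit{inf}}$ of any nontrivial lower bounds (and lower-bound certificates of value $\tfrac12$ for \cref{thm:lower-FOV,thm:lower-IOV} are readily constructed in this example). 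From marginal intervals alone, the only sound conclusion is the Fr\'echet interval $\bigl[\max(0,\,l_i^{\mathit{fin}}+l_i^{\mathit{inf}}-1),\,\min(u_i^{\mathit{fin}},u_i^{\mathit{inf}})\bigr]$, which is strictly weaker than the product interval on both ends. So what you have identified as ``the technical crux'' is in fact an unsoundness of the theorem's final claim as stated (compounded by the fact that $\{\Pi\models\varphi\}$ is the union over all Rabin pairs of such intersections, so even equating it with $\mathcal{E}\cap\mathcal{I}$ requires $M=1$). A correct repair must either assume independence explicitly, replace the product interval by the Fr\'echet one, or run a single certificate argument directly on the conjunction of the two acceptance conditions; your proposed conditional-independence argument is not salvageable.
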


\section{Algorithmic Approaches}\label{sec:method}
In this section, we provide template-based automated algorithms to synthesize the barrier certificates in~\cref{sec:theorem}. 

\noindent\textbf{Input and assumptions.} 
The input of our algorithm contains a PTS $\Pi$ with a labeling function $\mathit{Lb}$, an DRA $\mathcal{A}_{\varphi}$ that represents the $\omega$-regular property $\varphi$ (the states in $\mathcal{A}_{\varphi}$ can be natural numbers), an invariant $I$,\footnote{We assume that an invariant is provided as part of the input. Invariant generation is an orthogonal and well-studied problem, and can be automated using~\cite{solving2003linear,chatterjee2020polynomial}.} and techinical variables $d,k$ that specify the degree of polynomial templates and the counting number of barrier certificates (see, e.g., \cref{thm:lower-FOV}). We focus on polynomial PTSs, so we assume all guards and updates in $\Pi$ and all invariants $I(\loc)$'s are polynomial expressions over program variables.

\noindent\textbf{Output.} The task of our algorithm is to synthesize the polynomial barrier certificates $\eta,h$  in~\cref{sec:quan-sto-NNCS}, as well as the quantitative bounds from them.

\noindent\textbf{Overview.} Our algorithm is a standard template-based approach~\cite{chatterjee2016algorithmic,DBLP:conf/pldi/Wang0GCQS19,DBLP:journals/pacmpl/WangYFLO24}. We set up a polynomial template with unknown coefficients for each barrier certificate in~\cref{sec:quan-sto-NNCS}. Next, the conditions of each barrier certificate are encoded as  entailments of polynomial inequalities with unknown coefficients. Then by applying  the classical Farkas' Lemma~\cite{farkas1902theorie} or Putinar's Positivstellensatz~\cite{putinar1993positive}, we reduce the synthesis problem to Linear Programming (LP) or Semi-definite Programming (SDP). Finally, we solve the programming problem by calling an LP or SDP solver. Below we give the four steps of our algorithm. For brevity, we only showcase the BC in~\cref{thm:lower-FOV}, others can be derived in the same manner. 

\inlsec{Step 1 - Setting up templates}
 For each location $\loc\in L$,  our algorithm sets up a template $\eta(\loc)$ which is a polynomial consisting of all possible monomials of degree at most $d$ over program variables, each with an unknown coefficient. For example, in~\cref{ex:running-example}, the program has two program variables $x,y$ and two additional variables $q,l$ are required as in~\cref{thm:lower-FOV}. Let $d=1$, then at each location $\loc_i$, the algorithm sets $\eta(\loc_i)=c_{i,0}+c_{i,1}\cdot x+c_{i,2}\cdot y+c_{i,3}\cdot q+c_{i,4}\cdot l$ where $c_{i,j}$'s are unknown coefficents.

\inlsec{Step 2 - Constructing  entailments}
The algorithm symbolically computes the conditions of the barrier certificate in~\cref{thm:lower-FOV} using the templates from \textbf{Step 1}. Note that all of these conditions are entailments between polynomial inequalities over program variables whose coefficients are unknown. That is, they are in the form of ``$\forall \mathsf{x}.\ A(\mathsf{x})\Rightarrow b(\mathsf{x})$'', where $A$ is a set of polynomial inequalities over variables, and B is also a polynomial inequality over variables but with unknown coefficients. For example, for the program in~\cref{ex:running-example}, the algorithm symbolically computes \cref{eq:SLPBC-initial} where $q,l=0$ at the initial location, i.e.,
\begin{align}\label{eq:entail}
 \forall (x,y,q,l).\ (x\ge 0)\wedge (y\ge 0)\Rightarrow c_{0,0}+c_{0,1}\cdot x+c_{0,2}\cdot y\le \gamma.
\end{align}

\inlsec{Step 3 - Quantifier elimination}
After the previous step, we have a system of constraints of the form ``$\bigwedge_i (\forall \mathsf{x}.\ A_i(\mathsf{x})\rightarrow B_i(\mathsf{x})) $''.
The algorithm then eliminate the universal quantification over $\mathsf{x}$ in every constraint. We showcase the case where $A_i,B_i$ are linear inequalities. For example, the algorithm rewrites~\cref{eq:entail} as $\kappa_1\cdot x+\kappa_2\cdot y=\gamma-c_{0,0}-c_{0,1}\cdot x-c_{0,2}\cdot y$ with new unknown coefficients $\kappa_i\ge 0$. By Farkas' Lemma~\cite{farkas1902theorie}, we obtain a new system of linear equations soundly:
\begin{align*}
 0=\gamma-c_{0,0}, \qquad \kappa_1=-c_{0,1}, \qquad \kappa_2=-c_{0,2}.
\end{align*}
For polynomial cases, we refer the readers to~\cite{DBLP:journals/pacmpl/WangYFLO24}. 
 
\inlsec{Step 4 - Solving the system}
Since all the conditions of the BC are transformed into linear or semidefinite constraints, we call an LP or SDP solver to optimize the lower/upper bound from the BC and synthesize the unknown coefficients of the templates. For example, in~\cref{ex:running-example}, we compute the lower bound in~\cref{thm:lower-FOV}, and call an LP solver to maximize the objective function is $\eta(s_0,q_0,0)=c_{0,0}+c_{0,1}\cdot 1+c_{0,2}\cdot 0$ at the initial state $s_0=(\lin,1,0)$.

\begin{theorem}[Soundness]
If our algorithm finds valid solutions for the unknown coefficients in the polynomial templates, then it returns lower or upper bounds on the probabilities of $k$-times FOV-sets and IOV-sets.
\end{theorem}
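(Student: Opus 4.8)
The plan is to show that feasibility of the linear or semidefinite program produced in \textbf{Step 4} implies that the instantiated template is a genuine barrier certificate satisfying the hypotheses of the corresponding bound theorem (\cref{thm:lower-FOV,thm:upper-FOV,thm:lower-IOV,thm:upper-IOV}); the claimed bound then follows immediately by invoking that theorem. Thus the entire burden is to certify that each of the four reduction steps is \emph{sound}, in the sense that the procedure never declares a coefficient vector valid unless the associated barrier-certificate conditions actually hold on the relevant region.

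First I would observe that \textbf{Steps 1--2} are purely symbolic and lossless: substituting a concrete coefficient vector into the polynomial template yields an honest function $\eta$ (resp. $h$) of type $S\times Q\times\Nset\to\Rset$, and the symbolic pre-expectation reproduces exactly the left-hand side $\expv_{\rv\sim\mathcal{D}}[\eta(g(s,\rv),q',l')]$ of \cref{eq:SLPBC-nonincrease}. Here I would record the standing assumption that guards, updates, and invariants are polynomial and that $\mathcal{D}$ has computable polynomial moments, so that this pre-expectation is again a polynomial in the program variables; each barrier-certificate condition is then faithfully represented as an entailment ``$\forall\mathsf{x}.\,A(\mathsf{x})\Rightarrow B(\mathsf{x})$'' between polynomial inequalities. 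I would also note that the hypotheses of the bound theorems need only be checked on \emph{reachable} states: since the supplied invariant $I$ over-approximates the reachable state set while the martingale arguments (Ville's inequality, OST) evaluate the certificate solely along genuine trajectories, it is sound to relativize every ``$\forall s\in S$'' to the region carved out by $I$.

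The crux is \textbf{Step 3}. For linear templates and linear guards, Farkas' Lemma gives an \emph{exact} characterization: the derived linear system is feasible iff the original entailment is valid, so no unsound solution can slip through. For genuinely polynomial data I would appeal to Putinar's Positivstellensatz in its \emph{sound} direction only: any representation returned by the SDP witnesses that $B$ is nonnegative on the semialgebraic set defined by $A$, hence the entailment holds; the converse (completeness) is neither claimed nor needed. Consequently, any feasible solution instantiates the template to an $\eta$ (resp. $h$) satisfying all of \cref{eq:SLPBC-nonnegative,eq:SLPBC-initial,eq:SLPBC-last,eq:SLPBC-nonincrease} on the invariant region, and analogously for the upper-bound and IOV certificates.

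The main obstacle I anticipate lies in the pre-expectation entailment \cref{eq:SLPBC-nonincrease}: turning the expectation over $\mathcal{D}$ into an exact polynomial and then certifying its nonnegativity via Putinar requires the invariant region to satisfy the Archimedean (compactness) condition underlying the Positivstellensatz. I would therefore isolate this as a side hypothesis of the algorithm (already implicit in restricting to bounded polynomial invariants) and verify that under it the SDP relaxation is sound. With the certificate in hand, the proof finishes by direct appeal: the lower-bound FOV case invokes \cref{thm:lower-FOV} to conclude $\probm_{s_0}[U \text{ is a } k\text{-times FOV-set}]\ge 1-\eta(s_0,q_0,0)$, and the remaining three cases invoke \cref{thm:upper-FOV,thm:lower-IOV,thm:upper-IOV} verbatim, the optimized objective value being exactly the bound $l_i^{\mathit{fin}}$, $u_i^{\mathit{fin}}$, $l_i^{\mathit{inf}}$, or $u_i^{\mathit{inf}}$ reported by the solver.
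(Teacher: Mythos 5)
Your proposal is correct and takes essentially the same route as the paper's own (one-sentence) proof: a feasible solution of the synthesized LP/SDP instantiates the template into a genuine barrier certificate, and the claimed bound then follows by invoking \cref{thm:lower-FOV,thm:upper-FOV,thm:lower-IOV,thm:upper-IOV}. Your additional care about the soundness direction of Farkas/Putinar and about relativizing the conditions to the invariant region simply makes explicit what the paper leaves implicit in its Step 1--Step 4 description.
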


\begin{proof}
 If the algorithm can synthesize the BC in~\cref{thm:lower-FOV}, then one can obtain the lower bound on the probability that $U$ is a $k$-time FOV-set. The same holds for \cref{thm:upper-FOV,thm:lower-IOV,thm:upper-IOV}.  
\end{proof}

\section{Evaluation}\label{sec:experiment}

We implement our approach into a prototype named \textsc{TPInfer}. The experiments aim to evaluate the effectiveness of our method with respect to several goals, including:
\begin{enumerate}
    \item Quantitative verification of $\omega$-regular properties.
    \item Quantitative verification of temporal posterior inference.
    \item Impact of hyperparameters $d,k$.
\end{enumerate}

\subsection{Benchmarks and Setup}
We evaluate the effectiveness of our method on five probabilistic programs, all of which contain loops with unbounded iterations. Specifically, ex3 and ex4 are taken from PSI~\cite{gehr2016psi}, 1d-asym-rwl is taken from \cite{zaiser2025guaranteed}, and the remaining two programs are novel examples proposed in this work. The $\omega$-regular properties we verify are distinguished between those that can be expressed in LTL and those that can only be expressed in automata (i.e., RE1 and RE2). All experiments are executed on a workstation running Ubuntu 22.04, with a 32-core AMD Ryzen Threadripper CPU, 128GB RAM, and a single 24564MiB GPU.

\subsection{Effectiveness of Quantitative Verification of $\omega$-regular properties}
We first verify the effectiveness of our method in synthesizing barrier certificates for Rabin persistence and recurrence acceptance, and in using them to compute sound upper and lower bounds on the probability that a probabilistic program satisfies a given $\omega$-regular property. The results are summarized in Table~\ref{tab:omega-regular-results}. For each $\omega$-regular specification, we compute the probabilities associated with its corresponding Rabin persistence and recurrence acceptance conditions, from which certified lower and upper bounds on the satisfaction probability are synthesized.

As shown in the table, the simulation-based probabilities (Sim.) consistently fall within the intervals defined by the synthesized lower (L.B.) and upper (U.B.) bounds. This empirically confirms the correctness and tightness of our verification results. The alignment between the simulation outcomes and the certified bounds demonstrates that the stochastic barrier certificates constructed by our method are able to capture the essential probabilistic behaviors of the system, even for properties that require reasoning over infinite executions.

Overall, the results validate the soundness of our approach and demonstrate its practical capability in quantitatively verifying a wide range of $\omega$-regular properties.

\begin{table}[t]
\centering
\footnotesize
\caption{$\omega$-regular Quantitative Verification Results}
\label{tab:omega-regular-results}
\resizebox{\textwidth}{!}{
\begin{tabular}{l l c c c c c c c}
\toprule
\textbf{Task} & \textbf{Property} & \textbf{Sim.} & \textbf{L.B.} & \textbf{U.B.} & \textbf{E.d.} & \textbf{F.d.}  & \textbf{k} & \textbf{time (sec)}\\
\midrule
\multirow{4}{*}{ex3}
& F(b=0)                     & 0.5       & 0.50000 & 0.75356 &1  &3 &3 & 1012.94\\ 
& F(b=0)\&F(n$\geq$2)            & 0.2493  & 0.17385 & 0.37294 &1  &3 &4 & 28944.35 \\
& GF(b=1)                    & 0.668     & 0.66670 & 0.87391 &1  &3 &3 & 1664.72\\
& GF(b=1)\&GF(n$\geq$2)          & 0.2493  & 0.18394 & 0.36288 &1  &3 &3 & 1881.14\\

\midrule
\multirow{4}{*}{ex4}
& F(c2=1)                    & 0.6665    & 0.61804 & 0.85172 &1  &3 &4 & 2395.88\\
& G(c1=0)\&F(c2=1)          & 0.3335   & 0.29054 & 0.53084 &3  &3 &4 & 6946.69\\
& GF(c2=1)                   & 0.6665    & 0.61903 & 0.89802 &1  &3 &7 & 14393.83\\
& GF(c1=0 \& c2=1)          & 0.3333 & 0.30193  & 0.49660 &1  &3 &5 & 7165.65  \\

\midrule
\multirow{4}{*}{1d-asym-rw}
& F(n$\geq$4)                     & 0.109     & 0.01962 & 0.10990 &1  &3 &3 & 1531.05\\
& G(x$\leq$5)\&F(n$\geq$4)            & 0.1068  & 0.09477 & 0.24967 &3  &3 &3 & 3140.63\\
& GF(x=0 $\rightarrow$ n$\geq$4) & 0.1094  & 0.05140 & 0.13178 &1  &3 &3 & 1399.08\\
& GF(n$\geq$3)                    & 0.2502   & 0.18586 & 0.31765 &1  &3 &4 & 1611.56\\

\midrule
\multirow{2}{*}{RE1}
& F(y$>$0)                      & 0.3563 & 0.27042 & 0.36236 &1  &3 &2 & 5512.71\\
& G(x=1)\&F(y$>$0)              & 0.1255  & 0.09462 & 0.19100 &3  &3 &2 & 15791.58 \\

\midrule
\multirow{2}{*}{RE2}
& (n$\leq$2)*(x=1)        & 0.8749   & 0.86578 & 0.94964 &3  &3 &2 &2620.16 \\
& (n$\leq$3)*(x=1)        & 0.9375  & 0.93582 & 0.97739 &3  &3 &3 & 2940.68\\
\bottomrule
\end{tabular}
}

\vspace{2mm}
\textbf{Remarks.}  
\textbf{Sim.}: probability via simulation.  
\textbf{L.B.}: certified lower bounds.  
\textbf{U.B.}: certified upper bounds.
\textbf{E.d.}: polynomial degree of the barrier certificate corresponding to Ei.
\textbf{F.d.}: polynomial degree of the barrier certificate corresponding to Fi.
\textbf{k}: the counting number of the barrier certificates.
\end{table}

\subsection{Effectiveness of Quantitative Verification of Temporal Posterior Inference}
Table~\ref{tab:omega-regular-conditional} demonstrates the effectiveness of our method in quantitatively verifying the temporal posterior probabilities of the $\omega$-regular properties under consideration. For each property, our approach computes certified lower and upper probability bounds through the synthesis of stochastic barrier certificates, and these bounds provide a sound enclosure of the true posterior probability.

As shown in~\cref{tab:omega-regular-conditional}, the simulation-based probabilities consistently fall within the synthesized intervals, indicating that the bounds we obtain are valid and informative. This agreement further confirms that the proposed barrier-certificate-based method is capable of accurately capturing the probabilistic behavior of programs with unbounded loops, even when reasoning about complex temporal properties defined in the $\omega$-regular framework. And the barrier certificate parameters used for computing the posterior probabilities are kept consistent with those employed in Table~\ref{tab:omega-regular-results}. The results highlight the reliability of our quantitative verification approach and its suitability for practical probabilistic analysis.

\begin{table}[t]
\centering
\footnotesize
\caption{Quantitative Verification Results of Posterior Probabilities}
\label{tab:omega-regular-conditional}
\resizebox{\textwidth}{!}{
\begin{tabular}{l l c c c}
\toprule
\textbf{Task} & \textbf{Property} & \textbf{Sim.} & \textbf{L.B.} & \textbf{U.B.} \\
\midrule

\multirow{2}{*}{ex3}
& F(n $\geq$ 2) $\mid$ F(b = 0)                          
& 0.5001 & 0.23071 & 0.74588 \\
& GF(n $\geq$ 2) $\mid$ GF(b = 1)              
& 0.2493 & 0.21048 & 0.54429 \\

\midrule
\multirow{2}{*}{ex4}
& G(c1 = 0) $\mid$ F(c2 = 1)                
& 0.5000 & 0.34113 & 0.85891 \\
& GF(c1 = 0\&c2 = 1) $\mid$ GF(c2 = 1)                  
& 0.4998 & 0.33622 & 0.80222 \\

\midrule
\multirow{2}{*}{1d-asym-rw}
& G(x $\leq$ 5)$\mid$ F(n $\geq$ 4)      
& 0.9743 & 0.86227 & 1 \\ 
& GF(x = 0 $\rightarrow$ n $\geq$ 4) $\mid$ GF(n $\geq$ 3)
& 0.4375 & 0.16181 & 0.70903 \\

\midrule
\multirow{1}{*}{RE1}
& G(x = 1) $\mid$ F(y $>$ 0)                               
& 0.4005 & 0.26112 & 0.70632 \\

\midrule
\multirow{1}{*}{RE2}
& (n $\leq$ 2)*(x = 1) $\mid$ (n $\leq$ 3)*(x = 1)  
& 0.9329 & 0.88581 & 1 \\

\bottomrule
\end{tabular}
}

\vspace{2mm}
\textbf{Remarks.}  
\textbf{Sim.}: probability via simulation.  
\textbf{L.B.}: certified lower bounds.  
\textbf{U.B.}: certified upper bounds.
\end{table}

\subsection{Impact of Polynomial Degree $d$ and the Counting Number $k$}
Table~\ref{tab:d-k-results} illustrates the impact of the polynomial degree $d$ and the counting number $k$ on the synthesis performance of the barrier certificates. When smaller polynomial degrees are used, the expressive power of the polynomial basis becomes limited, which may prevent the synthesis procedure from finding a feasible certificate and consequently lead to failure cases. This limitation reflects the fact that low-degree polynomials may not adequately capture the geometric structure of the invariants.

In particular, increasing $k$ helps refine the guaranties by allowing certificates to characterize longer execution behaviors, causing the computed probability bounds to gradually stabilize. The convergence of the lower and upper bounds toward the simulation-based probability demonstrates the alignment between the quantitative verification results (see also~\cref{thm:convergence}) and the empirical execution outcomes.

However, Table~\ref{tab:d-k-results} also reveals a trade-off: larger values of $d$ and $k$ lead to a substantial increase in synthesis time. This is expected, as higher-degree polynomials enlarge the search space of the optimization problem, while larger visitation bounds impose more complex constraints on the certificate synthesis.

It is also worth noting that, due to floating-point precision issues in the numerical optimization solver, the synthesized tight lower bounds may occasionally exceed the simulation-based probability estimates. Such discrepancies occur only in borderline cases and do not affect the soundness of the method, as the certified bounds remain mathematically valid within the solver's precision model.
\begin{table}[t]
\centering
\footnotesize
\caption{The Influence of Varying $k$ and $d$ on the Synthesis Performance}
\label{tab:d-k-results}
\resizebox{\textwidth}{!}{
\begin{tabular}{
>{\centering\arraybackslash}p{0.7cm}
>{\centering\arraybackslash}p{3cm}
>{\centering\arraybackslash}p{1.3cm}
>{\centering\arraybackslash}p{1.3cm}
>{\centering\arraybackslash}p{1.3cm}
>{\centering\arraybackslash}p{0.5cm}
>{\centering\arraybackslash}p{0.5cm}
c
}

\toprule
\textbf{Task} & \textbf{Property} & \textbf{Sim.} & \textbf{L.B.} & \textbf{U.B.} & \textbf{d} & \textbf{k} & \textbf{time} \\
\midrule

ex3 & F(b = 0) & 0   & 0         & 0.7996 & 3 & 1 & 197.0999 \\
    &          & 0.5 & 0.5002  & \textbf{fail} & 3 & 2 & 531.289 \\
    &          & 0.5 & 0.5000  & 0.7535 & 3 & 3 & 1012.9471 \\
    &          & 0.5 & 0.5077  & 0.8645 & 3 & 4 & 1662.5991 \\
    &          & 0.5 & 0.5020  & 0.8835 & 3 & 5 & 2426.8311 \\
\midrule

RE2 & (n $\leq$ 2)* (x = 1) & 0.8749 & 0.6551 & 0.9778 & 2 & 1 & 34.7432 \\
    &                       & 0.8749 & \textbf{fail}     & 0.9829 & 2 & 2 & 94.5126 \\
    &                       & 0.8749 & \textbf{fail}     & 0.9821 & 2 & 3 & 184.5404 \\
    &                       & 0.8749 & 0.8713 & 0.9675 & 3 & 1 & 233.3126 \\
    &                       & 0.8749 & 0.8657 & 0.9496 & 3 & 2 & 639.6580 \\
    &                       & 0.8749 & 0.8721 & 0.8901 & 3 & 3 & 1267.5365 \\
\bottomrule
\end{tabular}
}
\vspace{2mm}
\textbf{Remarks.}  
\textbf{Sim.}: probability via simulation.  
\textbf{L.B.}: certified lower bounds of F1.  
\textbf{U.B.}: certified upper bounds of F1.
\end{table}

\section{Related Work}\label{sec:related}
\noindent\textbf{Formal Verification of Probabilistic Programs.}
Formal verification of probabilistic programs has been widely studied through semantics, reasoning principles, and automated analysis frameworks. Foundational work provides weakest-preexpectation semantics and program logics for reasoning about probabilistic behavior~\cite{kaminski2016weakest}. Subsequent research develops automated techniques for proving almost-sure termination and safety using ranking supermartingales and expectation-based reasoning~\cite{chakarov2013probabilistic,chatterjee2016algorithmic,DBLP:conf/vmcai/FuC19}. These approaches focus on state-based properties at fixed points---typically at termination---and do not support posterior reasoning about temporal events along execution traces. Our work differs by introducing posterior inference for temporal properties conditioned on observations.

\noindent\textbf{Barrier Certificates for Stochastic Systems.}
Barrier certificates provide sufficient conditions for proving safety of continuous, hybrid, and stochastic systems. The original framework introduces barrier functions for excluding trajectories from unsafe sets~\cite{prajna2004safety}, later extended to stochastic dynamics using Lyapunov-like and sum-of-squares conditions~\cite{prajna2007framework,DBLP:journals/tac/0001FZ20}. While powerful for proving invariance and safety, these techniques do not quantify posterior probabilities or incorporate observational evidence. In contrast, our approach reasons directly about temporal posterior probabilities induced by probabilistic program traces.

\noindent\textbf{Temporal Logic Verification.}
Temporal logics such as LTL and $\omega$-regular specifications are standard tools for expressing safety and liveness properties over infinite behaviors~\cite{vardi1986automata}. For probabilistic systems, model checkers such as PRISM~\cite{kwiatkowska2011prism} and probabilistic temporal logics~\cite{baier2008principles} compute the prior probability that a temporal specification is satisfied. These methods do not incorporate Bayesian conditioning on observations or posterior reasoning about temporal events. Our work complements this line of research by enabling temporal posterior inference over the execution traces of probabilistic programs.

\section{Conclusion}\label{sec:conclusion}
We presented \emph{temporal posterior inference}, which extends classical posterior inference to entire execution traces of probabilistic programs. By combining program semantics with temporal logic, our method supports posterior reasoning about safety, liveness, and other $\omega$-regular properties under observational evidence. We introduced quantitative proof certificates, an automated synthesis algorithm, and demonstrated effectiveness on programs where standard inference fails. This provides a principled foundation for unifying Bayesian conditioning with temporal verification and invites further extensions to richer properties and larger models.

\clearpage
\bibliographystyle{splncs04}
 \bibliography{reference-TPI}

\clearpage
\appendix
\begin{center}
    \Large \textbf{Appendix}
\end{center}

\section{Probability Theory}\label{app:probability-basis}

We start by reviewing some notions from probability theory.
\vskip 2pt 
\noindent\textbf{Random Variables and Stochastic Processes.}
A probability space is a triple ($\Omega, {\cal F}, {\mathbb P}$), where $\Omega$ is a non-empty sample space, ${\cal F}$ is a
$\sigma$-algebra over $\Omega$, and  $\mathbb P(\cdot)$ is a probability measure over $\cal F$, i.e. a function $\mathbb P$: ${\cal F}  \rightarrow [0,1]$ that satisfies the following properties: (1) ${\mathbb P}(\emptyset)=0$, (2)${\mathbb P}(\Omega - A)=1-{\mathbb P}[A]$, and (3) ${\mathbb P} (\textstyle\cup_{i=0}^{\infty} A_i)= \textstyle\sum_{i=0}^{\infty}{\mathbb P}(A_i)$ for  any
sequence $\{A_i\}_{i=0}^{\infty}$ of pairwise disjoint sets in ${\cal F}$.

Given a probability space ($\Omega, {\cal F}, {\mathbb P}$), a
random variable is a function $X: \Omega \rightarrow {\mathbb R} \cup \{\infty\}$ that is $\cal F$-measurable, i.e., for each $a \in {\mathbb R}$ we have that $\{\omega \in \Omega | X(\omega) \leq a \} \in {\cal F}$.
Moreover, a discrete-time stochastic process is a sequence $\{X_n\}_{n=0}^{\infty}$ of random variables in ($\Omega, {\cal F}, {\mathbb P}$).

\vskip 2pt 
\noindent\textbf{Conditional Expectation.}
Let ($\Omega, {\cal F}, {\mathbb P}$) be a probability space and $X$ be a random variable in ($\Omega, {\cal F}, {\mathbb P}$). The expected value of the random variable $X$,
denoted by ${\mathbb E}[X]$, is the Lebesgue integral of $X$ wrt $\mathbb P$. If the range
of $X$ is a countable set $A$, then ${\mathbb E}[X]= \textstyle\sum_{\omega \in A}\omega \cdot {\mathbb P}(X=\omega)$. 
Given a sub-sigma-algebra ${\cal F}' \subseteq {\cal F}$, a conditional expectation of $X$ for the given ${\cal F}'$ is a ${\cal F}'$-measurable random variable $Y$ such
that, for any $A \in {\cal F}'$, we have:
	\begin{equation}
	\begin{split}
     {\mathbb E}[X \cdot {\mathbb I}_{A}]={\mathbb E}[Y \cdot {\mathbb I}_{A}]
	\end{split}
	\end{equation}

\noindent Here, ${\mathbb I}_{A}: \Omega \rightarrow \{0,1\}$  is an indicator function of $A$, defined as ${\mathbb I}_{A}(\omega)=1$ if $\omega \in A$ and ${\mathbb I}_{A}(\omega)=0$ if $\omega \notin A$. 
Moreover, whenever the conditional expectation exists, it is also almost-surely unique, i.e., for any two ${\cal F}'$-measurable random variables $Y$ and $Y'$ which are conditional expectations of $X$ for given ${\cal F}'$, we have that ${\mathbb P}(Y=Y')=1$.

\vskip 2pt 
\noindent\textbf{Filtrations and Stopping Times.}
A filtration of the probability space ($\Omega, {\cal F}, {\mathbb P}$) is an infinite sequence $\{{{\cal F}_n}\}_{n=0}^{\infty}$ such that
for every $n$, the triple ($\Omega, {\cal F}_n, {\mathbb P}$) is a probability space and ${\cal F}_n \subseteq {\cal F}_{n+1} \subseteq {\cal F}$. 
A stopping time with respect to a filtration $\{{{\cal F}_n}\}_{n=0}^{\infty}$  is a random variable $T: \Omega \rightarrow {\mathbb N}_0 \cup \{\infty\}$ such that, for every $i \in {\mathbb N}_0$, it holds that $\{\omega \in \Omega |T(\omega) \leq i \} \in {\cal F}_i$. 
Intuitively, $T$ returns the time step at which some stochastic process shows a desired behavior and should be “stopped”.

A discrete-time stochastic process  $\{X_n\}_{n=0}^{\infty}$ in ($\Omega, {\cal F}, {\mathbb P}$) is adapted to a filtration $\{{{\cal F}_n}\}_{n=0}^{\infty}$, if
for all $n \geq 0$, $X_n$ is a random variable in ($\Omega, {\cal F}_n, {\mathbb P}$).

\vskip 2pt 
\noindent\textbf{Stopped Process.} Let $\{X_n\}_{n=0}^{\infty}$ be a stochastic process adapted to a filtration $\{\mathcal{F}_n\}_{n=0}^\infty$ and let $T$ be a stopping time w.r.t. $\{\mathcal{F}_n\}_{n=0}^\infty$. The stopped process $\{\tilde{X}_n\}_{n=0}^\infty$ is defined by
$$
\tilde{X}_n=
\begin{cases}
X_n, & \text{ for } n<T, \\
X_T, & \text{ for } n\ge T.
\end{cases}
$$

\vskip 2pt 
\noindent\textbf{Martingales.}
A discrete-time stochastic process $\{X_n\}_{n=0}^\infty$ to a filtration $\{{{\cal F}_n}\}_{n=0}^{\infty}$ is a martingale (resp. supermartingale, submartingale) if for all $n \geq 0$, ${\mathbb E}[|X_n|] < \infty$ and it holds almost surely (i.e., with probability 1) that ${\mathbb E}[X_{n+1}|{\cal F}_n] = X_n$ (resp. ${\mathbb E}[X_{n+1}|{\cal F}_n] \leq X_n$, ${\mathbb E}[X_{n+1}|{\cal F}_n] \geq X_n$).

\begin{theorem}[Optional Stopping Theorem (OST)~\cite{williams1991}]\label{thm:OST}
   Consider a stopping time $T$ w.r.t. a filtration $\{\mathcal{F}_n\}_{n=0}^{\infty}$ and a martingale (resp. supermartingale, submartingale) $\{X_n\}_{n=0}^{\infty}$ adapted to  $\{\mathcal{F}_n\}_{n=0}^{\infty}$. Then $\expv [|X_T|]<\infty$ and $\expv[X_T]=\expv [X_0]$ (resp. $\expv[X_T]\le \expv[X_0]$, $\expv[X_T]\ge \expv[X_0]$) if one of the following conditions holds:
   \begin{itemize}
       \item $T$ is bounded, i.e., $T<c$ for some constant $c$;
       \item $\expv[T]<\infty$, and there exists a constant $c$ such that for all $n\in \Nset$, $|X_{n+1}-X_n|\le c$;
       \item The stopped process $\{\tilde{X}_n\}_{n=0}^{\infty}$ w.r.t. $T$ is bounded, i.e., there exists some constant $c$ such that $|\tilde{X}_n|\le c$ for all $n\in\Nset$.
   \end{itemize}
\end{theorem}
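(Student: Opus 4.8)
The plan is to reduce everything to the stopped process $\{\tilde{X}_n\}_{n=0}^\infty$ with $\tilde{X}_n=X_{T\wedge n}$ and to prove the claim in three stages: (i) show that the stopped process inherits the (super/sub)martingale property, (ii) read off the equality/inequality of expectations at every finite horizon, and (iii) pass to the limit $n\to\infty$ using whichever side condition is assumed. Only the last stage depends on the three hypotheses, so I would isolate the measure-theoretic core from the bookkeeping.

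First I would establish the key lemma that $\{\tilde{X}_n\}$ is again a martingale (resp.\ supermartingale, submartingale) adapted to $\{\mathcal{F}_n\}$. The clean route is the increment representation $\tilde{X}_{n+1}=\tilde{X}_n+\mathbb{I}_{\{T>n\}}\,(X_{n+1}-X_n)$, where $\{T>n\}=\{T\le n\}^c\in\mathcal{F}_n$, so $\mathbb{I}_{\{T>n\}}$ is $\mathcal{F}_n$-measurable. Taking conditional expectation gives $\condexpv{\tilde{X}_{n+1}}{\mathcal{F}_n}=\tilde{X}_n+\mathbb{I}_{\{T>n\}}\,\condexpv{X_{n+1}-X_n}{\mathcal{F}_n}$. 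In the martingale case the last factor vanishes; in the supermartingale (resp.\ submartingale) case it is $\le 0$ (resp.\ $\ge 0$), and since the nonnegative indicator preserves the sign we obtain $\condexpv{\tilde{X}_{n+1}}{\mathcal{F}_n}\le\tilde{X}_n$ (resp.\ $\ge$). Integrability $\expv[|\tilde{X}_n|]<\infty$ follows from $|\tilde{X}_n|=|X_{T\wedge n}|\le\sum_{k=0}^n|X_k|$ and the integrability of each $X_k$. Iterating the defining relation then yields, for every finite $n$, $\expv[X_{T\wedge n}]=\expv[X_0]$ (resp.\ $\le$, $\ge$).

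Next I would treat the three side conditions separately to upgrade this finite-horizon statement to a claim about $X_T$. If $T<c$, then $T\wedge n=T$ for all $n\ge c$, so $X_{T\wedge n}=X_T$ identically and no limit is needed, with integrability of $X_T$ following from $|X_T|\le\sum_{k=0}^{\lceil c\rceil}|X_k|$. If $\expv[T]<\infty$ with bounded increments $|X_{k+1}-X_k|\le c$, the telescoping bound $|X_{T\wedge n}-X_0|\le c\,(T\wedge n)\le c\,T$ supplies the integrable dominating variable $|X_0|+c\,T$; since $\expv[T]<\infty$ forces $T<\infty$ almost surely, $X_{T\wedge n}\to X_T$ a.s., and Dominated Convergence gives both $\expv[|X_T|]\le\expv[|X_0|]+c\,\expv[T]<\infty$ and $\expv[X_{T\wedge n}]\to\expv[X_T]$. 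If the stopped process satisfies $|\tilde{X}_n|\le c$ (with $T<\infty$ a.s., which this case tacitly requires for $X_T$ to be defined), then $X_{T\wedge n}\to X_T$ a.s.\ and Bounded Convergence yields the same conclusion with $\expv[|X_T|]\le c$.

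Finally I would combine the stages: in each case $\expv[X_T]=\lim_n\expv[X_{T\wedge n}]$, and since the finite-horizon relation $\expv[X_{T\wedge n}]=\expv[X_0]$ (resp.\ $\le$, $\ge$) holds for every $n$ and weak inequalities survive limits, the asserted identity (resp.\ inequality) for $\expv[X_T]$ follows. The main obstacle is stage (iii): justifying $\lim_n\expv[X_{T\wedge n}]=\expv[X_T]$, since the relation at each horizon is only a statement about expectations while the random variables $X_{T\wedge n}$ may oscillate pathwise. The three conditions are tailored precisely to furnish either a finite horizon, an integrable dominating variable, or a uniform bound, so that an appropriate convergence theorem applies; a secondary subtlety is the almost-sure finiteness of $T$, which is automatic under the first two conditions but must be invoked explicitly for the third.
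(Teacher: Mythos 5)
This statement is one the paper does not prove at all: it is quoted in Appendix~A as a classical result and attributed to Williams' textbook~\cite{williams1991}, so there is no in-paper proof to compare against. Your argument is correct and is, in essence, the standard textbook proof (it is Williams' own): the increment representation $\tilde{X}_{n+1}=\tilde{X}_n+\mathbb{I}_{\{T>n\}}(X_{n+1}-X_n)$ with $\{T>n\}\in\mathcal{F}_n$ shows the stopped process inherits the (super/sub)martingale property, iteration gives the finite-horizon relation $\expv[X_{T\wedge n}]=\expv[X_0]$ (resp.\ $\le$, $\ge$), and each of the three side conditions supplies exactly what is needed to pass to the limit --- a finite horizon in case one, the integrable dominating variable $|X_0|+cT$ and dominated convergence in case two, and bounded convergence in case three. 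Your flag on the third case is also the right one: as stated in the paper, boundedness of the stopped process alone does not make $X_T$ well defined on $\{T=\infty\}$, and the classical formulation (Williams, Thm.~10.10(b)) explicitly adds $T<\infty$ almost surely (or, equivalently, one adopts the convention $X_T:=\lim_n X_{T\wedge n}$, which exists there by martingale convergence for a bounded stopped process); this is precisely the reading under which the paper's applications in \cref{thm:upper-FOV,thm:upper-IOV}, where $\kappa$ may be infinite with positive probability, remain sound. One cosmetic remark: your domination bound in stage~(i), $|\tilde{X}_n|\le\sum_{k=0}^{n}|X_k|$, is cruder than necessary but perfectly adequate since only fixed finite $n$ is involved. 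No gaps.
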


\begin{theorem}[Ville's Inequality~\cite{ville1939etude}]\label{thm:ville}
For any nonnegative supermartingale $\{X_n\}_{n=0}^\infty$ and any real number $c>0$,
\[
\probm [\mathrm{sup}_{n\ge 0} X_n\ge c]\le \frac{\expv[X_0]}{c}.
\]  
\end{theorem}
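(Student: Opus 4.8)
The plan is to derive Ville's inequality from a finite-horizon maximal inequality for nonnegative supermartingales, which itself follows from the Optional Stopping Theorem (\cref{thm:OST}) applied to a suitable bounded stopping time. First I would introduce the first-hitting time $T:=\inf\{n\ge 0 : X_n\ge c\}$, with the convention $T=\infty$ when no such index exists. Since $\{X_n\}$ is adapted to its filtration $\{\mathcal{F}_n\}$ and $\{T\le n\}=\bigcup_{m\le n}\{X_m\ge c\}\in\mathcal{F}_n$, this $T$ is a genuine stopping time, and for every constant $N$ the truncation $T\wedge N$ is a stopping time bounded by $N$.

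Next, for each fixed horizon $N\in\Nset$ I would apply the first clause of \cref{thm:OST} (the bounded-stopping-time case) to the supermartingale $\{X_n\}$ with stopping time $T\wedge N$, obtaining $\expv[X_{T\wedge N}]\le\expv[X_0]$. I would then lower-bound the left-hand side using nonnegativity together with the defining property of $T$: on the event $\{T\le N\}$ we have $T\wedge N=T$ and $X_T\ge c$ (in discrete time the hitting value is at least $c$ whenever $T<\infty$), while on $\{T>N\}$ we merely use $X_N\ge 0$. Hence $X_{T\wedge N}\ge c\cdot\mathbb{I}_{\{T\le N\}}$ pointwise, which gives $\expv[X_{T\wedge N}]\ge c\cdot\probm[T\le N]$. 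Combining the two estimates yields the finite-horizon bound $\probm[T\le N]=\probm[\max_{0\le n\le N}X_n\ge c]\le \expv[X_0]/c$, uniformly in $N$.

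Finally I would pass to the limit $N\to\infty$. By continuity from below of $\probm$, the events $\{T\le N\}$ increase to $\{T<\infty\}=\{\exists n:\ X_n\ge c\}$, so $\probm[\exists n:\ X_n\ge c]\le\expv[X_0]/c$. The remaining task is to upgrade this to the stated event $\{\sup_{n}X_n\ge c\}$, and here lies the only genuine subtlety: the supremum along a path may equal $c$ without any individual $X_n$ actually attaining $c$, so these two events need not coincide. I would close this gap by invoking the bound just established at every threshold $c'\in(0,c)$ and using the inclusions $\{\sup_n X_n\ge c\}\subseteq\{\exists n:\ X_n> c'\}\subseteq\{\exists n:\ X_n\ge c'\}$, which give $\probm[\sup_n X_n\ge c]\le\expv[X_0]/c'$ for all $c'<c$; letting $c'\uparrow c$ delivers the claimed $\probm[\sup_n X_n\ge c]\le\expv[X_0]/c$. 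The main obstacle is therefore not the martingale estimate itself but this attainment subtlety relating the supremum to a threshold that is actually hit; everything else reduces to the bounded-stopping-time case of OST and a routine continuity-of-measure argument.
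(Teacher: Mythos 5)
Your proof is correct, but note that the paper does not prove this statement at all: it is imported as a classical result with a citation to Ville (and the remark that it is Doob's nonnegative supermartingale inequality), so there is no in-paper argument to compare against. Your derivation is the standard one—apply the bounded-stopping-time case of \cref{thm:OST} to the truncated hitting time $T\wedge N$, use $X_{T\wedge N}\ge c\cdot\mathbb{I}_{\{T\le N\}}$ (valid in discrete time since the hitting value satisfies $X_T\ge c$ on $\{T<\infty\}$), and pass to the limit by continuity from below—and the integrability needed for OST is automatic under the paper's definition of supermartingale, which builds in $\expv[|X_n|]<\infty$. You are also right to flag and close the attainment subtlety distinguishing $\{\sup_{n}X_n\ge c\}$ from $\{\exists n:\ X_n\ge c\}$ via thresholds $c'\uparrow c$; many presentations state the inequality with the event $\{\exists n:\ X_n\ge c\}$ precisely to avoid this point, and your limiting argument handles the stronger form stated here correctly.
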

This theorem is also called Doob's nonnegative supermartingale inequality.

\section{Supplementary Materials for \cref{sec:theorem}}\label{app:sec-sto-BCs}

\noindent\textbf{\cref{prop:FOV}}
Given an FOV product $\Pi \otimes \mathcal{A}_{\varphi}^{U}$, $U$ is an FOV-set, i.e., $\mathrm{Inf}(\theta[2])\cap U=\emptyset$ for any product trace $\theta\in\Theta$,  iff $\mathop{max}(\theta[3])<\infty$ for any $\theta\in\Theta$. 

\begin{proof}
  It is straightforward to see that in the FOV product, for any product trace $\theta\in\Theta$, $\mathop{max}(\theta[3])<\infty$ implies  $\mathrm{Inf}(\theta[2])\cap U=\emptyset$, and vice versa. \qed
\end{proof}

\noindent\textbf{\cref{prop:IOV}}
	Given an IOV product $\Pi \otimes \mathcal{A}_{\varphi}^{U}$, $U$ is an IOV-set, i.e., $\mathrm{Inf}(\theta[2])\cap U\neq\emptyset$ for any product trace $\theta\in\Theta$,  iff $\mathop{max}(\theta[3])<\infty$ for any $\theta\in\Theta$.

\begin{proof}
  It is straightforward to see that in the IOV product, for any product trace $\theta\in\Theta$, $\mathop{max}(\theta[3])<\infty$ implies  $\mathrm{Inf}(\theta[2])\cap U=\emptyset$, and vice versa. \qed
\end{proof}

\noindent\textbf{\cref{cor:FOV-IOV-omega}}
If 	$\mathop{max}(\theta[3])<\infty$ for any $\theta\in\Theta$ in the FOV product $\Pi \otimes \mathcal{A}_{\varphi}^{E_i}$, then $E_i$ is an FOV-set. If $\mathop{max}(\theta[3])<\infty$ for any $\theta\in\Theta$ in the IOV product $\Pi \otimes \mathcal{A}_{\varphi}^{F_i}$, then $F_i$ is ian IOV-set. Finally, one can conclude that $\Pi\models \varphi$. 

\begin{proof}
 By \cref{prop:FOV,prop:IOV}, one can prove that $E_i$ is an FOV-set and $F_i$ is an IOV-set. Therefore, according to the Rabin acceptance condition in~\cref{sec:omega-prelim}, one can conclude that $\Pi$ satisfies the $\omega$-regular property $\varphi$.   
\end{proof}

\noindent\textbf{\cref{thm:lower-FOV} (Lower Bounds on $k$-times FOV-sets).}
Suppose there exists  a barrier certificate (BC) $\eta:S\times Q\times \Nset\rightarrow \Rset$ of the FOV product $\Pi \otimes \mathcal{A}_{\varphi}^{U}$ such that for two constants $\gamma\in (0,1)$ and $k\in\Nset_0$, the following conditions hold:
	\begin{align}
		& \eta(s,q,l)\ge 0 & \forall s\in S,q\in Q,l\in\Nset_0 \tag{\ref{eq:SLPBC-nonnegative}} \\
		& \eta(s,q_0,0)\le \gamma & \forall s\in S_0 \tag{\ref{eq:SLPBC-initial}} \\
		& \eta(s,q,k+1)\ge 1  & \forall s\in S, q\in U \tag{\ref{eq:SLPBC-last}} \\
		& \expv_{\rv\sim\mathcal{D}}[\eta(g(s,\rv),q',l') ] \le \eta(s,q,\ell)  & \forall s\in S, q\in Q, l\in [0,k] \tag{\ref{eq:SLPBC-nonincrease}}
	\end{align}
	where $l'=l+1$ if $q'\in U$ and $l$ otherwise.
	Then for any initial product state $(s_0,q_0,0)$ where $s_0\in S_0$, we have that $ \probm_{s_0} [ U \text{ is a } k \text{-times FOV-set}  ]\ge 1-\eta(s_0,q_0,0)=:l_i^{\mathit{fin}}$.

\begin{proof}
Let $\Omega$ be the set of all trajectories induced by the FOV product $\Pi \otimes \mathcal{A}_{\varphi}^{U}$. Construct a stochastic process $\{X_t\}_{t\ge 0}$ over $\Omega$ such that $X_t(\omega)=\eta(\omega_t)=\eta(s_t,q_t,\ell_t)$ for any $\omega\in\Omega$. Let $\kappa$ be the first time that $U$ is visited $k+1$ times, i.e., $\kappa(\omega):=\mathrm{inf}\{t\in\Nset\mid \ell_t=k+1 \}$. Then we construct a stopped process $\{ \tilde{X}_t \}_{t\ge 0}$ by
$$
\tilde{X}_t=
\begin{cases}
X_t, & \text{ if } t<\kappa, \\
X_\kappa, & \text{ if } t\ge \kappa.
\end{cases}
$$
Next, we will prove that $\{ \tilde{X}_t \}_{t\ge 0}$ is a non-negative supermartingale. When $t<\kappa$, we have that 
\[
\expv[\tilde{X}_{t+1}\mid \tilde{\mathcal{F}}_t  ] =\expv[ X_{t+1}\mid \mathcal{F}_t  ]\le X_t=\tilde{X}_t,
\]
where the inequality is obtained by \cref{eq:SLPBC-nonincrease}. When $t\ge\kappa$, we have that
\[
\expv[\tilde{X}_{t+1}\mid \tilde{\mathcal{F}}_t  ]=X_\kappa=\tilde{X}_\kappa.
\]
By~\cref{eq:SLPBC-nonnegative}, we can conclude that $\{\tilde{X}_t\}_{t\ge 0}$ is a non-negative supermartingale.
According to~\cref{eq:SLPBC-last}, we have that $X_t(\omega)=\eta(s_t,q_t,\ell_t)\ge 1$ when $q_t\in U$ and $\ell_t= k+1$. Thus, for every initial state $(s_0,q_0,\ell_0)$, we have that
\begin{align*}
    & 1-\probm_{s_0} [U \text{ is visited at most } k \text{ times} ] \\
    =\ & \probm_{s_0} [U \text{ is visited more than } k \text{ times} ] \\
    =\ & \probm_{s_0} [\{ (s_t,q_t,\ell_t)   \}_{t=0}^{\infty} \text{ where } \exists t\in\Nset.\ q_t\in E_i \text{ and } \ell_t= k+1 ] \\
    \le\  & \probm_{s_0} \left [\mathop{\mathrm{sup}}\limits_{t\in\Nset} X_t(\omega) \ge 1  \right ] = \probm_{s_0} \left [\mathop{\mathrm{sup}}\limits_{t\in\Nset} \tilde{X}_t(\omega) \ge 1  \right ]\\
    \le\  & \tilde{X}_0(\omega) =X_0(\omega)
\end{align*}
 where the last inequality is obtained by Ville's Inequality(see~\cref{thm:ville}).   \qed
\end{proof}

\noindent\textbf{\cref{thm:upper-FOV} (Upper Bounds on $k$-times FOV-sets).}
	Suppose there exists  a barrier certificate $\eta:S\times Q\times \Nset\rightarrow \Rset$ of the FOV product $\Pi \otimes \mathcal{A}_{\varphi}^{U}$ such that for some constants $\alpha\in (0,1)$, $0\le \lambda<\gamma\le 1$ and $k\in\Nset$, the following conditions hold:
\begin{align}
	& 0\le \eta(s,q,l)\le 1 & \forall s\in S,q\in Q,l\in\Nset_0 \tag{\ref{eq:SUPBC-boundedness}}\\
	& \eta(s,q_0,0)\ge \gamma & \forall s\in S_0 \tag{\ref{eq:SUPBC-initial}} \\
	& \eta(s,q,k+1)\le \lambda  & \forall s\in S, q\in U   \tag{\ref{eq:SUPBC-last}} \\
	& \alpha\cdot \expv_{\rv\sim\mathcal{D}}[\eta(g(s,\rv),q',l') ] \ge \eta(s,q,\ell)  & \forall s\in S, q\in Q, l\in [0,k] \tag{\ref{eq:SUPBC-increase}}
\end{align}
	where $l'=l+1$ if $q'\in U$ and $l$ otherwise.
	Then for any initial product state $(s_0,q_0,0)$ where $s_0\in S_0$, we have that $\probm_{s_0} [U \text{ is a } k \text{-times FOV-set} ]\le 1-\eta(s_0,q_0,0)=:u_i^{\mathit{fin}}$.
	
\begin{proof}
Let $\Omega$ be the set of all trajectories induced by the FOV product $M_\mu \otimes \mathcal{A}_{\varphi}^{U}$. Construct a stochastic process $\{X_t\}_{t\ge 0}$ over $\Omega$ such that $X_t(\omega)=\eta(\omega_t)=\eta(s_t,q_t,\ell_t)$ for any $\omega\in\Omega$. Let $\kappa$ be the first time that $U$ is visited $k+1$ times, i.e., $\kappa(\omega):=\mathrm{inf}\{t\in\Nset\mid \ell_t=k+1 \}$. Then construct a new stochastic process $\{Y_t\}_{t\ge 0}$ such that $Y_t:=\alpha^t X_t$. Define its stopped process $\{ \tilde{Y}_t \}_{t\ge 0}$ by
$$
\tilde{Y}_t=
\begin{cases}
Y_t, & \text{ if } t<\kappa, \\
Y_\kappa, & \text{ if } t\ge \kappa.
\end{cases}
$$
Next, we will prove that $\{ \tilde{Y}_t \}_{t\ge 0}$ a submartingale. When $t<\kappa$, we have that
\begin{align*}
   \expv[\tilde{Y}_{t+1}\mid \tilde{\mathcal{F}}_t] &= \expv[Y_{t+1}\mid \mathcal{F}_t] \\
   &= \expv[\alpha^{t+1}X_{t+1}\mid \mathcal{F}_t]=\alpha^{t+1} \expv [X_{t+1}\mid \mathcal{F}_t] \\
   &= \alpha^t\cdot \alpha\expv[X_{t+1}\mid \mathcal{F}_t] \ge \alpha^t X_t=\tilde{Y}_t
\end{align*}
where the inequality is obtained by~\cref{eq:SUPBC-increase}. When $t\ge \kappa$, we have that
\begin{align*}
    \expv[\tilde{Y}_{t+1}\mid \tilde{\mathcal{F}}_t]=Y_\kappa= \tilde{Y}_t
\end{align*}
By~\cref{eq:SUPBC-boundedness}, we have $\tilde{Y}_t\in [0,1]$ for any $t\in\Nset$. Thus, we can conclude that $\{ \tilde{Y}_t \}_{t\ge 0}$ is a submartingale. Then for every initial state $(s_0,q_0,\ell_0)$, we have that
\begin{align*}
 &  1-\probm_{s_0} [U \text{ is visited at most } k \text{ times} ] \\
 =\ & \probm_{s_0}[U \text{ is visited more than } k \text{ times} ] \\  
  =\ & \probm_{s_0} [\{ (s_t,q_t,\ell_t)   \}_{t=0}^{\infty} \text{ where } \exists t\in\Nset.\ q_t\in E_i \text{ and } \ell_t= k+1 ] \\
  = \ & \probm_{s_0} [\kappa <\infty] \\
 \ge \ & \expv[\alpha^\kappa\cdot X_{\kappa} \mid \kappa<\infty]\cdot \probm_{s_0}[\kappa<\infty] \\
 = \ & \expv[\alpha^\kappa \cdot X_{\kappa} \mid \kappa<\infty]\cdot \probm_{s_0}[\kappa<\infty]+\expv[\alpha^\kappa \cdot X_{\kappa}\mid \kappa=\infty]\cdot \probm_{s_0}[\kappa=\infty] \\
 =\  & \expv[\alpha^\kappa \cdot X_\kappa] = \expv[Y_\kappa] =\expv[\tilde{Y}_\kappa]   \ge \expv[\tilde{Y}_0]=X_0\\
\end{align*}
where the first inequality is derived from the fact that $\alpha^\kappa\cdot X_\kappa\in (0,1)$ as $\alpha\in (0,1)$ and $X_\kappa\in [0,1]$ by~\cref{eq:SUPBC-boundedness}, the fourth equality is obtained by the fact that $\alpha^\kappa \cdot X_\kappa=0$ when $\kappa=\infty$ and $\alpha\in (0,1)$, and the last inequality is got by applying the Optional Stopping Theorem~\cref{thm:OST}.  \qed
\end{proof}

\noindent\textbf{\cref{thm:lower-IOV} (Lower Bounds on $k$-times IOV-sets).}
	Suppose there exists  a barrier certificate $h:S\times Q\times \Nset\rightarrow \Rset$ of the IOV product $\Pi \otimes \mathcal{A}_{\varphi}^{U}$ such that for two constants $\gamma\in (0,1)$ and $k\in\Nset$ , the following conditions hold:
\begin{align}
	& h(s,q,l)\ge 0 & \forall s\in S,q\in Q,l\in\Nset_0 \tag{\ref{eq:SLRBC-nonnegative}} \\
	& h(s,q_0,0)\le \gamma & \forall s\in S_0 \tag{\ref{eq:SLRBC-initial}} \\
	& h(s,q,k+1)\ge 1  & \forall s\in S, q\in Q\setminus U \tag{\ref{eq:SLRBC-last}} \\
	& \expv_{\rv\sim\mathcal{D}}[h(g(s,\rv),q',l') ] \le h(s,q,l)  & \forall s\in S, q\in Q, l\in [0,k] \tag{\ref{eq:SLRBC-nonincrease}}
\end{align}
	where $l'=0$ if $q'\in U$ and $l'=l+1$ otherwise.
	Then for any initial product state $(s_0,q_0,0)$ where $s_0\in S_0$, we have that $\probm_{s_0} [U \text{ is a } k \text{-times IOV-set}  ]\ge 1-h(s_0,q_0,0)=:l_i^{\mathit{inf}}$.

\begin{proof}
Let $\Omega$ be the set of all trajectories induced by the IOV product $M_\mu \otimes \mathcal{A}_{\varphi}^{U}$. Construct a stochastic process $\{X_t\}_{t\ge 0}$ over $\Omega$ such that $X_t(\omega)=h(\omega_t)=h(s_t,q_t,\ell_t)$ for any $\omega\in\Omega$. 
Let $\kappa$ be the first time that $Q\setminus U$ is visited consecutively $k+1$ times, i.e., $\kappa(\omega):=\mathrm{inf}\{t\in\Nset\mid \ell_t=k+1 \}$. 
Then define a stopped process $\{ \tilde{X}_t \}_{t\ge 0}$ by
$$
\tilde{X}_t=
\begin{cases}
X_t, & \text{ if } t<\kappa, \\
X_\kappa, & \text{ if } t\ge \kappa.
\end{cases}
$$
Next, we will prove that $\{ \tilde{X}_t \}_{t\ge 0}$ is a non-negative supermartingale. When $t<\kappa$, we have that 
\[
\expv[\tilde{X}_{t+1}\mid \tilde{\mathcal{F}}_t  ] =\expv[ X_{t+1}\mid \mathcal{F}_t  ]\le X_t=\tilde{X}_t,
\]
where the inequality is obtained by \cref{eq:SLRBC-nonincrease}. When $t\ge\kappa$, we have that
\[
\expv[\tilde{X}_{t+1}\mid \tilde{\mathcal{F}}_t  ]=X_\kappa=\tilde{X}_\kappa.
\]
By~\cref{eq:SLRBC-nonnegative}, we can conclude that $\{ \tilde{X}_t \}_{t\ge 0}$ is a non-negative supermartingale. According to~\cref{eq:SLRBC-last}, we have that $X_t(\omega)=h(s_t,q_t,\ell_t)\ge 1$ when $q_t\in Q\setminus F_i$ and $\ell_t=k+1$. Thus, for every initial state $(s_0,q_0,\ell_0)$, we have that
\begin{align*}
    & 1-\probm_{s_0} [Q\setminus U \text{ is visited consecutively at most } k \text{ times} ] \\
    =\ & \probm_{s_0} [Q\setminus U \text{ is visited consecutively more than } k \text{ times} ] \\
    =\ & \probm_{s_0} [\{ (s_t,q_t,\ell_t)   \}_{t=0}^{\infty} \text{ where } \exists t\in\Nset.\ q_t\in Q\setminus F_i \text{ and } \ell_t= k+1 ] \\
    \le\  & \probm_{s_0} \left [\mathop{\mathrm{sup}}\limits_{t\in\Nset} X_t(\omega) \ge 1  \right ] = \probm_{s_0} \left [\mathop{\mathrm{sup}}\limits_{t\in\Nset} \tilde{X}_t(\omega) \ge 1  \right ]\\
    \le\  & \tilde{X}_0(\omega) =X_0(\omega)
\end{align*}
 where the last inequality is obtained by Ville's Inequality~\cref{thm:ville}.   \qed

\end{proof}

\noindent\textbf{\cref{thm:upper-IOV} (Upper Bounds on $k$-times IOV-sets).}
	Suppose there exists  a barrier certificate $h:S\times Q\times \Nset\rightarrow \Rset$ of the IOV product $\Pi \otimes \mathcal{A}_{\varphi}^{U}$ such that for some constants $\gamma\in (0,1)$, $0\le \lambda<\gamma\le 1$ and $k\in\Nset$  , the following conditions hold:
\begin{align}
	& 0\le h(s,q,l)\le 1 & \forall s\in S,q\in Q,l\in\Nset_0 \tag{\ref{eq:SURBC-boundedness}} \\
	& h(s,q_0,0)\ge \gamma & \forall s\in S_0 \tag{\ref{eq:SURBC-initial}} \\
	& h(s,q,k+1)\le \lambda  & \forall s\in S, q\in Q\setminus U \tag{\ref{eq:SURBC-last}} \\
	& \alpha\cdot \expv_{\rv\sim\mathcal{D}}[h(g(s,\rv),q',l') ] \ge h(s,q,l)  & \forall s\in S, q\in Q, l\in [0,k] \tag{\ref{eq:SURBC-increase}}
\end{align}
	where $l'=0$ if $q'\in U$ and $l'=l+1$ otherwise.
	Then for any initial product state $(s_0,q_0,0)$ where $s_0\in S_0$, we have that $ \probm_{s_0} [U\text{ is a } k \text{-times IOV-set} ]\le 1-h(s_0,q_0,0)=:u_i^{\mathit{inf}}$.

\begin{proof}
Let $\Omega$ be the set of all trajectories induced by the IOV product $M_\mu \otimes \mathcal{A}_{\varphi}^{U}$. Construct a stochastic process $\{X_t\}_{t\ge 0}$ over $\Omega$ such that $X_t(\omega)=h(\omega_t)=h(s_t,q_t,\ell_t)$ for any $\omega\in\Omega$. Let $\kappa$ be the first time that $Q\setminus U$ is visited consecutively $k+1$ times, i.e., $\kappa(\omega):=\mathrm{inf}\{t\in\Nset\mid \ell_t=k+1 \}$. Then construct a new stochastic process $\{Y_t\}_{t\ge 0}$ such that $Y_t:=\alpha^t X_t$. Define its stopped process $\{ \tilde{Y}_t \}_{t\ge 0}$ by
$$
\tilde{Y}_t=
\begin{cases}
Y_t, & \text{ if } t<\kappa, \\
Y_\kappa, & \text{ if } t\ge \kappa.
\end{cases}
$$
Next, we will prove that $\{ \tilde{Y}_t \}_{t\ge 0}$ a submartingale. When $t<\kappa$, we have that
\begin{align*}
   \expv[\tilde{Y}_{t+1}\mid \tilde{\mathcal{F}}_t] &= \expv[Y_{t+1}\mid \mathcal{F}_t] \\
   &= \expv[\alpha^{t+1}X_{t+1}\mid \mathcal{F}_t]=\alpha^{t+1} \expv [X_{t+1}\mid \mathcal{F}_t] \\
   &= \alpha^t\cdot \alpha\expv[X_{t+1}\mid \mathcal{F}_t] \ge \alpha^t X_t=\tilde{Y}_t
\end{align*}
where the inequality is obtained by~\cref{eq:SURBC-increase}. When $t\ge \kappa$, we have that
\begin{align*}
    \expv[\tilde{Y}_{t+1}\mid \tilde{\mathcal{F}}_t]=Y_\kappa= \tilde{Y}_t
\end{align*}
By~\cref{eq:SURBC-boundedness}, we have $\tilde{Y}_t\in [0,1]$ for any $t\in\Nset$. Thus, we can conclude that $\{ \tilde{Y}_t \}_{t\ge 0}$ is a submartingale. Then for every initial state $(s_0,q_0,\ell_0)$, we have that
\begin{align*}
 &  1-\probm_{s_0} [Q\setminus U \text{ is visited consecutively at most } k \text{ times} ] \\
 =\ & \probm_{s_0}[Q\setminus U \text{ is visited consecutively more than } k \text{ times} ] \\  
  =\ & \probm_{s_0} [\{ (s_t,q_t,\ell_t)   \}_{t=0}^{\infty} \text{ where } \exists t\in\Nset.\ q_t\in Q\setminus F_i \text{ and } \ell_t= k+1 ] \\
  = \ & \probm_{s_0} [\kappa <\infty] \\
 \ge \ & \expv[\alpha^\kappa\cdot X_{\kappa} \mid \kappa<\infty]\cdot \probm_{s_0}[\kappa<\infty] \\
 = \ & \expv[\alpha^\kappa \cdot X_{\kappa} \mid \kappa<\infty]\cdot \probm_{s_0}[\kappa<\infty]+\expv[\alpha^\kappa \cdot X_{\kappa}\mid \kappa=\infty]\cdot \probm_{s_0}[\kappa=\infty] \\
 =\  & \expv[\alpha^\kappa \cdot X_\kappa] = \expv[Y_\kappa] =\expv[\tilde{Y}_\kappa]   \ge \expv[\tilde{Y}_0]=X_0\\
\end{align*}
where the first inequality is derived from the fact that $\alpha^\kappa\cdot X_\kappa\in (0,1)$ as $\alpha\in (0,1)$ and $X_\kappa\in [0,1]$ by~\cref{eq:SURBC-boundedness}, the fourth equality is obtained by the fact that $\alpha^\kappa \cdot X_\kappa=0$ when $\kappa=\infty$ and $\alpha\in (0,1)$, and the last inequality is got by applying the Optional Stopping Theorem~\cref{thm:OST}.  \qed
    
\end{proof}

\noindent\textbf{\cref{thm:convergence} (Convergence).}
For any initial state $s_0\in S_0$, when $k\rightarrow m^*$ in~\cref{thm:lower-FOV,thm:upper-FOV}, we have that $\probm_{s_0}[E_i \text{ is an FOV-set} ] \in [l_i^{\mathit{fin}},u_i^{\mathit{fin}}]$; when $k\rightarrow n^*$ in \cref{thm:lower-IOV,thm:upper-IOV}, we have that $\probm_{s_0}[F_i \text{ is an IOV-set} ] \in [l_i^{\mathit{inf}},u_i^{\mathit{inf}}]$. Hence, $\probm_{s_0}[\Pi\models\varphi]\in [ l_i^{\mathit{fin}}\cdot l_i^{\mathit{inf}},  u_i^{\mathit{fin}}\cdot u_i^{\mathit{inf}}]$.

\begin{proof}
   The proof is straightforward. When $k$ is sufficiently large enough, we will count the contributions of all valid traces. 
\end{proof}

\section{Supplementary Materials for \cref{sec:experiment}}

\begin{figure}[htbp]
\centering

\subfloat[RE 1]{
\begin{minipage}[t]{0.45\textwidth}
\raggedright
\hspace{10mm}\texttt{$x \sim \mathrm{Bernoulli}(0.3)$; $y := 0$; \\
\hspace{10mm}\textbf{while} $\mathrm{flip}(0.5)$ \textbf{do} \\
\hspace{10mm}\quad \textbf{if} $\mathrm{flip}(0.6)$ \textbf{then} \\
\hspace{10mm}\quad\quad $x := 1 - x$; \\
\hspace{10mm}\quad \textbf{if} $x = 1$ \textbf{then} \\
\hspace{10mm}\quad\quad $y := y + 1$; \\
\hspace{10mm}\textbf{od}
}
\end{minipage}
}
\hfill
\subfloat[RE 2]{
\begin{minipage}[t]{0.45\textwidth}
\raggedright
\hspace{10mm}\texttt{$x := 0$; $y := 0$; \\
\hspace{10mm}\textbf{while} $x = 0$ \textbf{do} \\
\hspace{10mm}\quad $n := n + 1$; \\
\hspace{10mm}\quad \textbf{if} $\mathrm{flip}(0.5)$ \textbf{then} \\
\hspace{10mm}\quad\quad $x := 1 - x$; \\
\hspace{10mm}\textbf{od}
}
\end{minipage}
}

\caption{The novel examples}
\label{fig:running-example1}
\end{figure} 
 
\end{document}